\documentclass{LMCS}

\def\doi{8(3:27)2012}
\lmcsheading%
{\doi}
{1--15}
{}
{}
{Dec.~10, 2010}
{Sep.~29, 2012}
{}
 
\usepackage{amsfonts}
\usepackage{tikz}
\usepackage{microtype}
\usepackage{enumerate,hyperref}

\usetikzlibrary{snakes}

%\newtheorem{definition}{Definition}
%\newtheorem{lemma}{Lemma}
%\newtheorem{corollary}{Corollary}
%\newtheorem{remark}{Remark}
%\newtheorem{theorem}{Theorem}
%\newtheorem{proposition}{Proposition}

% symbol for firing a transition
\newcommand{\step}[1]{[#1\rangle}
% symbol for natural numbers
\newcommand{\nat}{\mathbb{N}}
\newcommand{\Z}{\mathbb{Z}}
% abbreviation for implication arrow
\newcommand{\then}{\Longrightarrow}
\newcommand{\parikh}{\wp}
\newcommand{\I}{C} % incidence matrix
\newcommand{\C}{{\ensuremath \Gamma}} % family of constraints
\newcommand{\ord}{\ensuremath{\Omega}}

\def\hascolor#1#2{#2} % #1 for color, #2 for black and white
\hascolor{
% for colourful nets
\tikzstyle{place}=[circle,draw=black!50,fill=black!10,thick,inner sep=0pt,minimum size=6mm]
\tikzstyle{oplace}=[circle,draw=green!50,fill=green!10,thick,inner sep=0pt,minimum size=6mm]
\tikzstyle{iplace}=[circle,draw=red!50,fill=red!10,thick,inner sep=0pt,minimum size=6mm]
\tikzstyle{transition}=[rectangle,draw=black!50,fill=black!20,thick,inner sep=0pt,minimum size=6mm]
\tikzstyle{token}=[circle,draw=black,fill=black,inner sep=0pt,minimum size=1mm]
\tikzstyle{arrow}=[-latex]
\tikzstyle{myred}=[red]
\tikzstyle{mygreen}=[green]
}{
% for black and white nets
\usetikzlibrary{patterns}
\tikzstyle{place}=[circle,draw=black,thick,inner sep=0pt,minimum size=6mm]
\tikzstyle{gplace}=[circle,draw=black,fill=black!40,thick,inner sep=0pt,minimum size=6mm]
\tikzstyle{oplace}=[circle,draw=black,pattern=north west lines,pattern color=black!30,thick,inner sep=0pt,minimum size=6mm]
\tikzstyle{iplace}=[circle,draw=black,pattern=north east lines,pattern color=black!30,thick,inner sep=0pt,minimum size=6mm]
\tikzstyle{transition}=[rectangle,draw=black,thick,inner sep=0pt,minimum size=6mm]
\tikzstyle{gtransition}=[rectangle,draw=black,fill=black!40,thick,inner sep=0pt,minimum size=6mm]
\tikzstyle{otransition}=[rectangle,draw=black,pattern=north west lines,pattern color=black!30,thick,inner sep=0pt,minimum size=6mm]
\tikzstyle{itransition}=[rectangle,draw=black,pattern=north east lines,pattern color=black!30,thick,inner sep=0pt,minimum size=6mm]
\tikzstyle{token}=[circle,draw=black,fill=black,inner sep=0pt,minimum size=1mm]
\tikzstyle{arrow}=[-latex]
\tikzstyle{myred}=[black!60]
\tikzstyle{mygreen}=[black!40]
}

\begin{document}

\title[Applying CEGAR to the Petri Net State Equation]{Applying CEGAR to the Petri Net State Equation}

\author[H.~Wimmel]{Harro Wimmel} 
\author[K.~Wolf]{Karsten Wolf}

\address{Universit\"at Rostock, Institut f\"ur Informatik}
\email{\{harro.wimmel, karsten.wolf\}@uni-rostock.de}

\begin{abstract}
We propose a reachability verification technique that combines the {\em 
Petri net state equation} (a linear algebraic
overapproximation of the set of reachable states) with the concept of
{\em counterexample guided abstraction refinement}. In essence, we 
replace the search through the set of reachable
states by a search through the space of solutions of the state equation. 
We demonstrate the excellent performance of the
technique on several real-world examples. The technique is particularly 
useful in those cases where the reachability query
yields a negative result: While state space based techniques need to 
fully expand the state space in this case, our
technique often terminates promptly. In addition, we can derive some 
diagnostic information in case of unreachability
while state space methods can only provide witness paths in the case of 
reachability.
\end{abstract}

\keywords{
Petri Net, Reachability Problem, Integer Programming, CEGAR, Structure
Analysis, Partial Order Reduction.}
\subjclass{F.2.2, I.6.4}

\maketitle
\enlargethispage*{2\baselineskip}

\section{Introduction}

Reachability is {\em the} fundamental verification problem.
For place/transition Petri nets (which may have infinitely many states), it is one of the hardest decision problems 
known among the naturally emerging yet decidable problems
in computer science. General solutions have been found by
Mayr~\cite{mayr84} and Kosaraju~\cite{kosaraju82} with later simplifications made by Lambert~\cite{lambert92}, but there are
complexity issues. All these approaches use coverability graphs which can have a non-primitive-recursive
size with respect to the corresponding Petri net. A new approach by Leroux~\cite{leroux09} not using
such graphs gives some hope, but a concrete upper bound for the worst case complexity so far
eludes us. In a sense even worse, Lipton~\cite{lipton76} has shown that the problem is {\sf EXPSPACE}-hard,
so any try at programming a tool efficiently solving this problem to the full extent must surely fail.

Nevertheless, efficient tools exist that are applicable to a considerable number of problem instances.
Model checkers, symbolic~\cite{cms06} or with partial order reduction~\cite{Wolf_2007_icatpn}, have been used
successfully to solve quite large reachability problems. On a positive answer, a model checker can typically generate
a trace, i.e. a firing sequence leading to the final marking. In contrast, negative answers are usually
not accompanied by any diagnostic information. Such information, i.e.\ a counterexample or reasoning
why the problem has a negative solution would require a deep analysis of the structure of the
Petri net. So far, no tools are known that analyze the structure of a net and allow for such
reasoning. 

This paper presents an approach to the reachability problem that combines two existing methods. First,
we employ the {\em state equation} for Petri nets. This is a linear-algebraic overapproximation on the
set of reachable states. Second, we use the concept of {\em counterexample guided abstraction refinement}
(CEGAR) \cite{cegar} for enhancing the expressiveness of the state equation. In essence, we iteratively analyse spurious
solutions of the state equation and add constraints that exclude a solution found to be spurious but do not exclude
any real solution. The approach has several advantages compared to (explicit or symbolic) purely state space based
verification techniques:
\begin{iteMize}{$\bullet$}
\item The search is quite focussed from the beginning as we traverse the solution space of the state equation rather than the set of reachable states;
\item The search is close to breadth-first traversal, so small witness traces are generated;
\item The method may perform well on unreachable problem instances (where state space techniques compute maximum size state spaces);
\item In several unreachable problem instances, some kind of diagnostic information can be provided; 
\item A considerable workload can be shifted to very mature tools for solving linear programming problems.
\end{iteMize}

\noindent In Sect.~\ref{sec2} we give the basic
definitions. Section~\ref{sec3} shows how to use integer programming tools to find candidates
for a solution. Section~\ref{sec4} deals with the analysis of the Petri net structure that
is needed to push the integer programming onto the right path. In Sect.~\ref{sec5} we
use methods of partial order reduction to mold the results of the integer programming
into firing sequences solving the reachability problem. 
In Sect.~\ref{sec6} the overall algorithm is presented and in Sect.~\ref{sec7} we drop
a few hints on how and when diagnostic information for unreachability can be generated.
%Section~\ref{sec6} picks up all
%the information produced before to provide diagnostic information should the problem
%be infeasible. 
Finally, Sect.~\ref{sec8} compares the results of an implementation
with another model checker, showing that structure analysis can compete with other approaches.

\section{The Reachability Problem}\label{sec2}

\begin{defi}[Petri net, marking, firing sequence]
A {\em Petri net} $N$ is a tuple $(S,T,F)$ with a set $S$ of {\em places}, a set $T$ of {\em transitions},
where $S\neq\emptyset\neq T$ and $S\cap T=\emptyset$, and a mapping $F$: $(S\times T)\cup(T\times S)\to\nat$
defining {\em arcs} between places and transitions.

A {\em marking} or {\em state} of a Petri net is a map $m$: $S\to\nat$. A place $s$ is said to contain $k$
{\em tokens} under $m$ if $m(s)=k$. A transition $t\in T$ is 
{\em enabled under $m$}, $m\step{t}$, if $m(s)\ge F(s,t)$ for every $s\in S$. A transition $t$
{\em fires under $m$ and leads to $m'$}, $m\step{t}m'$, if additionally $m'(s)=m(s)-F(s,t)+F(t,s)$
for every $s\in S$.

A word $\sigma\in T^*$ is a {\em firing sequence under $m$ and leads to $m'$}, $m\step{\sigma}m'$,
if either $m=m'$ and $\sigma=\varepsilon$, the empty word, or $\sigma=wt$, $w\in T^*$, $t\in T$
and $\exists m''$: $m\step{w}m''\step{t}m'$. A firing sequence $\sigma$ under $m$ is enabled under $m$,
i.e.\ $m\step{\sigma}$. The {\em Parikh image} of a word $\sigma\in T^*$ is the vector
$\parikh(\sigma)$: $T\to\nat$ with $\parikh(\sigma)(t)=\#_t(\sigma)$, where $\#_t(\sigma)$ is the number of
occurrences of $t$ in $\sigma$. For any firing sequence $\sigma$, we call $\parikh(\sigma)$ {\em realizable}.
\end{defi}

As usual, places are drawn as circles (with tokens as black dots inside them), transitions as rectangles, 
and arcs as arrows with $F(x,y)>0$ yielding an arrow pointing from $x$ to $y$. If an arc has a weight of
more than one, i.e. $F(x,y)>1$, the number $F(x,y)$ is written next to the arc. In case $F(x,y)=F(y,x)>0$,
we may sometimes draw a line with arrowheads at both ends.

Note, that the Parikh image is not an injective function. Therefore, $\parikh(\sigma)$ can be realizable
even if $\sigma$ is not a firing sequence (provided there is another firing sequence $\sigma'$
with $\parikh(\sigma)=\parikh(\sigma'))$.

\begin{defi}[Reachability problem]
A marking $m'$ is {\em reachable} from a marking $m$ in a net $N=(S,T,F)$ if there is a firing sequence 
$\sigma\in T^*$ with $m\step{\sigma}m'$. A tuple $(N,m,m')$ of a net and two markings is called a
{\em reachability problem} and has the answer ``yes'' if and only if $m'$ is reachable from $m$ in $N$.
The set ${\sf RP} = \{(N,m,m')\,|\,N$ is a Petri net, $m'$ is reachable from $m$ in $N\}$ is generally 
called {\em the} reachability problem, for which membership is to be decided.
\end{defi}

It is well-known that a necessary condition for a positive
answer to the reachability problem is the feasibility of the {\em state equation}.

\begin{defi}[State equation]
For a Petri net $N=(S,T,F)$ let $\I\in\nat^{S\times T}$, defined by $\I_{s,t}=F(t,s)-F(s,t)$, be the
{\em incidence matrix} of $N$. For two markings $m$ and $m'$, the system of linear equations
$m+\I x=m'$ is the {\em state equation} of $N$ for $m$ and $m'$. A vector $x\in \nat^T$ fulfilling
the equation is called a {\em solution}.
\end{defi}

\begin{prop}
For any firing sequence $\sigma$ of a net $N=(S,T,F)$ leading from $m$ to $m'$, i.e. $m\step{\sigma}m'$,
holds $m+\I\parikh(\sigma)=m'$, i.e.\ the Parikh vector of $\sigma$ is a solution of the state equation for
$N$, $m$, and $m'$. 
\end{prop}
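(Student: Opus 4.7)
The plan is to proceed by induction on the length of the firing sequence $\sigma$, following the inductive structure of the definition of firing sequences. The statement couples two quantities that are each defined recursively (markings reached via $\step{\cdot}$ on the one hand, and Parikh images on the other), so induction on $|\sigma|$ is the natural match.

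For the base case I take $\sigma = \varepsilon$. By the definition of firing sequences, $m\step{\varepsilon}m'$ forces $m = m'$. Since $\parikh(\varepsilon)$ is the zero vector, $\I\parikh(\varepsilon) = 0$ and the equation $m + \I\parikh(\sigma) = m'$ reduces to $m = m$.

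For the inductive step I write $\sigma = wt$ with $w\in T^*$ and $t\in T$, and use the definition to pick an intermediate marking $m''$ with $m\step{w}m''\step{t}m'$. The induction hypothesis applied to $w$ gives $m + \I\parikh(w) = m''$. The single-transition step $m''\step{t}m'$ unfolds, by the definition of firing, to $m'(s) = m''(s) - F(s,t) + F(t,s) = m''(s) + \I_{s,t}$ for every place $s\in S$, which in vector form is exactly $m' = m'' + \I\,e_t$, where $e_t$ denotes the unit vector in $\nat^T$ at coordinate $t$. Finally, since $\parikh(wt) = \parikh(w) + e_t$, linearity of $\I x$ in $x$ yields
\[ m + \I\parikh(\sigma) = m + \I\parikh(w) + \I e_t = m'' + \I e_t = m', \]
completing the induction.

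There is essentially no hard step here: the only thing to be careful about is keeping the bookkeeping between the componentwise firing rule and the matrix-vector form of the state equation straight, i.e.\ recognising the column of $\I$ indexed by $t$ as the vector of net token changes caused by firing $t$. Everything else is a direct unfolding of the definitions given just above the proposition.
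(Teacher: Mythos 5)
Your proof is correct and is just the fully spelled-out version of what the paper asserts in one line (``this is just a reformulation of the firing condition for $\sigma$''): the induction on $|\sigma|$ with the observation that the column $\I_{\cdot,t}$ records the net token change of firing $t$ is exactly the intended argument.
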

This is just a reformulation of the firing condition for $\sigma$.

\begin{prop}
If the Petri net is acyclic, i.e.~the transitive closure of $F$ is irreflexive then the existence of a
solution of the state equation for $N$, $m$, and $m'$ is a sufficient condition for reachability of $m'$
from $m$ in $N$.
\end{prop}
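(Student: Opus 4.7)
My plan is to prove the statement by induction on $|x|:=\sum_{t\in T}x(t)$, where $x$ is a nonnegative integer solution of the state equation $m+\I x=m'$. The base case $|x|=0$ is trivial: then $x=0$, the equation forces $m=m'$, and $\sigma=\varepsilon$ works. For the inductive step I want to find some transition $t_0$ with $x(t_0)>0$ that is enabled at $m$; then the vector $x':=x-e_{t_0}\in\nat^T$ is a solution of the state equation from $m_1:=m+\I e_{t_0}$ to $m'$ (since $m_1+\I x'=m+\I e_{t_0}+\I(x-e_{t_0})=m+\I x=m'$), it has strictly smaller size, and induction applied to $(m_1,m')$ yields a firing sequence $\sigma'$ with $m_1\step{\sigma'}m'$, so $\sigma:=t_0\sigma'$ does the job.

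The heart of the proof is therefore: \emph{in an acyclic net, whenever $x\neq 0$ is a nonnegative solution of the state equation, some $t_0$ with $x(t_0)>0$ is enabled at $m$}. Here I use acyclicity. Since the transitive closure of $F$ is irreflexive on $S\cup T$, I can extend $F$ to a topological total order $\prec$ on $S\cup T$. Pick $t_0$ to be $\prec$-minimal among the transitions in the support $\mathrm{supp}(x)=\{t\mid x(t)>0\}$. I claim $t_0$ is enabled at $m$, i.e.\ $m(s)\ge F(s,t_0)$ for every place $s$ in the preset of $t_0$.

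Fix such a place $s$ with $F(s,t_0)>0$, so $s\prec t_0$. Split the $s$-row of the state equation by separating transitions feeding $s$ from transitions consuming from $s$:
\[
0\le m'(s)=m(s)+\sum_{t'\colon F(t',s)>0} F(t',s)\,x(t')-\sum_{t'\colon F(s,t')>0} F(s,t')\,x(t').
\]
Every transition $t'$ with $F(t',s)>0$ satisfies $t'\prec s\prec t_0$, hence $t'\notin\mathrm{supp}(x)$ by the choice of $t_0$, so the first sum vanishes. Rearranging gives
\[
m(s)\ \ge\ \sum_{t'\colon F(s,t')>0} F(s,t')\,x(t')\ \ge\ F(s,t_0)\,x(t_0)\ \ge\ F(s,t_0),
\]
which is exactly the enabling condition. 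Thus $t_0$ can fire, completing the inductive step.

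The only subtle point I expect to have to handle carefully is the step where the contribution of producing transitions to $s$ is eliminated: one must be sure that $\prec$ can indeed be chosen to extend $F$ (any topological linearization of the DAG induced by $F$ on $S\cup T$ works, using acyclicity), and that every transition producing into $s$ lies strictly before $s$ in $\prec$, so that minimality of $t_0$ in $\mathrm{supp}(x)$ forces $x(t')=0$ for all those $t'$. After that, the inductive closure is routine.
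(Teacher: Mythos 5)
Your proof is correct and follows essentially the same route as the paper's own (sketched) argument: pick an $F$-minimal (here: $\prec$-minimal for a topological linearization, which is the same thing) transition in the support of $x$, show via the state equation row of each input place that it is enabled because no producing transition can lie in the support, fire it, and induct on $\sum_t x(t)$. You merely spell out in full the enabling argument that the paper leaves implicit.
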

In an acyclic net, minimal transitions (with respect to $F$) that occur in the support of a solution $x$ of the state equation
must be enabled under $m$. Firing such a transition with resulting marking $m^*$ leads to a smaller
reachability problem $(N,m^*,m')$ that has $x$ minus one occurrence of $t$ as a solution. By induction, the whole
solution can be unwound to a firing sequence.

In Petri nets with cycles, it is possible to have a sequence $\sigma$ such that its Parikh image fulfills the state equation but it is
not a firing sequence. 
The easiest example for this occurs in a net $N=(\{s\},\{t\},F)$ with $F(s,t)=1=F(t,s)$.
Let $m$ and $m'$ be the {\em empty marking}, i.e. one with zero tokens overall, then $m\step{t}m'$ is
obviously wrong but $m+\I\parikh(\sigma)=m'$ holds since $\I=(0)$. The effect can occur whenever the Petri net
contains a cycle of transitions. Interestingly, certain cycles of transitions can also help to overcome
this problem, see Fig.~\ref{f.tinv}. Here, we would like to fire a word $tt'$ from the marking $m$ with
$m(s_1)=m(s_2)=0$ and $m(s_3)=1$, but obviously, this is impossible. If we interleave, however, $tt'$ with the sequence $uu'$
we can fire $utt'u'$. The sequence $utt'u'$ corresponds to a solution of the state equation for $N$, $m$, and $m'$ that is not minimal.
More precisely, we have $\I\parikh(uu') = 0$. At first glance, such a sequence does not change the marking and appears to be neglegible.
However, it has a valuable effect on $tt'$ in ``lending a token'' to $s_2$. The transition $u$ provides that token, $u'$ takes it back, but meanwhile the token
helps the sequence $tt'$ to proceed. The process of ``lending tokens'' is not visible in the state equation as the latter overapproximates the
token game of Petri nets to linear algebra. In consequence, it is necessary for our approach to consider non-minimal solutions of the state equation
and in particular solutions to the corresponding homogeneous system of equations.

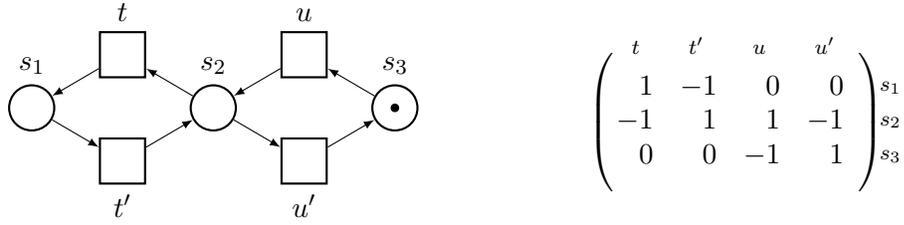
\begin{figure}[tb]
\centering
\begin{tikzpicture}[scale=0.5]
\node[transition,label=above:$t$] (t) {};
\node[place,label=above:$s_2$] (s2) [below right of=t,xshift=5mm] {};
\node[transition,label=below:$t'$] (tx) [below left of=s2,xshift=-5mm] {};
\node[place,label=above:$s_1$] (s1) [below left of=t,xshift=-5mm] {};
\node[transition,label=below:$u'$] (ux) [below right of=s2,xshift=5mm] {};
\node[transition,label=above:$u$] (u) [above right of=s2,xshift=5mm] {};
\node[place,label=above:$s_3$] (s3) [below right of=u,xshift=5mm] {};
\draw (s3) node[token] {};
\draw[arrow] (s2) to (t);
\draw[arrow] (t) to (s1);
\draw[arrow] (s1) to (tx);
\draw[arrow] (tx) to (s2);
\draw[arrow] (s2) to (ux);
\draw[arrow] (ux) to (s3);
\draw[arrow] (s3) to (u);
\draw[arrow] (u) to (s2);
\end{tikzpicture}
\hspace*{2cm}
\unitlength1cm
\begin{picture}(3,1.5)(0,-1.3)
\put(0,0){$\left(\begin{array}{c}\hspace*{2.9cm}\\~\\~\\~\end{array}\right)$}
\put(0.2,0){$\begin{array}{rrrr}1&-1&0&0\\ -1&1&1&-1\\ 0&0&-1&1\end{array}$}
\put(3.7,0.05){$\begin{array}{r}\scriptstyle s_1\\ \scriptstyle s_2\\ \scriptstyle s_3\end{array}$}
\put(0.4,1.0){$\begin{array}{rrrr}\scriptstyle t\;\;\;\,&\scriptstyle t'\;\;\;\,&\scriptstyle u\;\;\;\,&\scriptstyle u'\end{array}$}
\end{picture}
\caption{\label{f.tinv}The word $tt'$ cannot fire, but we can borrow a token from the circle $uu'$, so $utt'u'$ can
fire and leads to the same marking as $tt'$. The incidence matrix of the net is shown on the right}
\end{figure}

\begin{defi}[T-invariant]
Let $N=(S,T,F)$ be a Petri net and $\I$ its incidence matrix. A vector $x\in \nat^T$ is called a
{\em $T$-invariant} if $\I x=0$. If a T-invariant corresponds to some executable firing sequence, it is called
realizable.
\end{defi}

A realizable $T$-invariant represents a cycle in the state space. Corresponding firing sequences do not change the marking. However, its interleaving with another sequence $\sigma$
may turn  $\sigma$ from unrealizable to realizable. 

Solving the state equation is a
non-negative integer programming problem. From linear algebra we know that the solution space is 
semi-linear.

\begin{cor}[Solution space]
For a given state equation $m+\I x=m'$ over a net $N=(S,T,F)$, there are numbers $j,k\in\nat$ and finite sets of vectors 
$B=\{b_i\in\nat^T\,|\,1\le i\le j\}$ (base vectors) and $P=\{p_i\in\nat^T\,|\,1\le i\le k\}$ (period vectors) such that:
\begin{iteMize}{$\bullet$}
\item all $b_i\in B$ are pairwise incomparable (by standard componentwise comparison for vectors) and thus minimal solutions,
\item $P$ forms a basis for the non-negative solution space $P^*=\{\sum_{i=1}^kn_ip_i\,|\,n_i\in\nat, p_i\in P\}$ of $\I x=0$,
\item for all solutions $x$ there are $n_i\in\nat$ for $1\le i\le k$ and $n\in\{1,\ldots,j\}$ such that
	$x = b_n+\sum_{i=1}^kn_ip_i$,
\item for every solution $x$, all vectors of the set $x+P^*$ are solutions as well.
\end{iteMize}
\end{cor}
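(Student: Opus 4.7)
The plan is to derive the four bullets from two classical facts: Dickson's lemma, which says that every set of pairwise incomparable vectors in $\nat^T$ is finite, and a decomposition argument for solutions of linear Diophantine systems.

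First I would define $B$ to be the set of all $\le$-minimal solutions of the inhomogeneous system $m+\I x = m'$, and $P$ to be the set of all $\le$-minimal non-zero solutions of the homogeneous system $\I x = 0$. By Dickson's lemma applied to $\nat^T$, both $B$ and $P$ are finite, which yields the numbers $j$ and $k$ as well as the incomparability claim of the first bullet. If $B$ is empty the corollary is vacuously true, so assume $B\neq\emptyset$ in the rest of the argument.

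Next I would prove that $P^*$ equals the full set of non-negative integer solutions to $\I x = 0$. The inclusion $P^*\subseteq\{x\in\nat^T\mid \I x=0\}$ is immediate from linearity. For the reverse inclusion I would use strong induction on $\sum_{t\in T} y(t)$: given a non-zero homogeneous solution $y$, well-foundedness of $\le$ on $\nat^T$ provides a minimal non-zero homogeneous solution $p\in P$ with $p\le y$; then $y-p\in\nat^T$ is again a homogeneous solution of strictly smaller total weight, so by induction $y-p\in P^*$, and hence $y=p+(y-p)\in P^*$. This establishes the second bullet.

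With these tools the third bullet follows by the same trick applied to the inhomogeneous system. Given any solution $x$ of $m+\I x=m'$, well-foundedness gives a $b\in B$ with $b\le x$. Then $x-b\in\nat^T$ satisfies $\I(x-b)=\I x-\I b=(m'-m)-(m'-m)=0$, so by what was shown above $x-b\in P^*$, yielding the desired decomposition $x=b+\sum_{i=1}^k n_i p_i$. The fourth bullet is an immediate linearity check: for any $x$ with $m+\I x=m'$ and any $q\in P^*$, $\I q=0$ gives $m+\I(x+q)=m+\I x=m'$.

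The only subtle step is proving that $P$ generates all non-negative homogeneous solutions; everything else is bookkeeping around Dickson's lemma. This subtle step is the classical Hilbert-basis-type result for the monoid of non-negative solutions to $\I x=0$, and the induction on component sum described above is the standard way to handle it without invoking heavier machinery.
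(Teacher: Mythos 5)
Your proof is correct. Note that the paper offers no proof of this corollary at all: it simply asserts that ``from linear algebra we know that the solution space is semi-linear'' and treats the statement as a classical fact about non-negative integer solutions of linear systems. What you have written is precisely the standard argument behind that fact: Dickson's lemma for the finiteness of the two antichains $B$ and $P$, the Hilbert-basis induction on $\sum_t y(t)$ to show that the minimal non-zero homogeneous solutions generate the whole monoid $\{y\in\nat^T\mid \I y=0\}$, and the observation that subtracting a minimal inhomogeneous solution $b\le x$ leaves a homogeneous one. The only step you gloss over is that a $\le$-minimal element of the (finite, non-empty) set of non-zero homogeneous solutions below $y$ is automatically minimal among \emph{all} non-zero homogeneous solutions, hence lies in $P$; this follows in one line from the downward closure of $\{z\mid z\le y\}$ and is worth stating explicitly. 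With that remark added, your write-up is a complete, self-contained justification of a statement the paper leaves to the reader.
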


Note that only linear combinations with nonnegative coefficients are considered in this representation.
In this setting, the number of base vectors as well as the number of period vectors may exponentially depend on the
size of the net. Permitting negative combinations (and thus solutions in the integers instead of the natural numbers)
would yield a significant loss in precision of the state equation.

So we know that all solutions can be obtained by taking a minimal solution $b$ of the state equation
and adding a linear combination of $T$-invariants from some basis $P$. Usually, not all the elements
from $B$ and $P$ we use for a solution are realizable, though. While the sum of two realizable $T$-invariants
remains realizable (just concatenate the according firing sequences as they have identical initial and final marking), 
the sum of two non-realizable
$T$-invariants may well become realizable. This can be seen in Fig.~\ref{f.tinv2}, where neither $\parikh(tt')$
nor $\parikh(uu')$ is realizable under the marking $m$ with $m(s_1)=m(s_4)=1$ and $m(s_2)=m(s_3)=0$, 
but the sequence $tut'u'$ realizes $\parikh(tt'uu')$. The matter is even more complicated when a minimal
solution from $B$ is introduced, because positive minimal solutions are never $T$-invariants (unless $m=m'$), i.e.\ 
they change the marking of the net, so their realizations cannot just be concatenated.

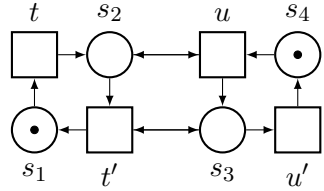
\begin{figure}[tb]
\centering
\begin{tikzpicture}[scale=0.5]
\node[transition,label=above:$t$] (t) {};
\node[place,label=above:$s_2$] (s2) [right of=t] {};
\node[transition,label=below:$t'$] (tx) [below of=s2] {};
\node[place,label=below:$s_1$] (s1) [below of=t] {};
\node[transition,label=above:$u$] (u) [right of=s2,xshift=5mm] {};
\node[place,label=above:$s_4$] (s4) [right of=u] {};
\node[transition,label=below:$u'$] (ux) [below of=s4] {};
\node[place,label=below:$s_3$] (s3) [below of=u] {};
\draw (s1) node[token] {};
\draw (s4) node[token] {};
\draw[arrow] (s1) to (t);
\draw[arrow] (t) to (s2);
\draw[arrow] (s2) to (tx);
\draw[arrow] (tx) to (s1);
\draw[arrow] (s4) to (u);
\draw[arrow] (u) to (s3);
\draw[arrow] (s3) to (ux);
\draw[arrow] (ux) to (s4);
\draw[arrow] (s2) to (u);
\draw[arrow] (u) to (s2);
\draw[arrow] (s3) to (tx);
\draw[arrow] (tx) to (s3);
\end{tikzpicture}
\caption{\label{f.tinv2}Neither the $T$-invariant $\parikh(tt')$ nor $\parikh(uu')$ is realizable, but $\parikh(tt'uu')$ is, 
	by the sequence $tut'u'$}
\end{figure}

\section{Traversing the Solution Space}\label{sec3}

Though realizability of transition vectors is and remains a problem, 
the first real problem we encounter when we try to solve the
state equation is a practical one. While there are solvers that can determine the complete sets of base
and period vectors for the solution space, e.g. {\em 4ti2}~\cite{4ti2}, these programs can only do that for very
small systems. If the system has a hundred or more variables, we are out of luck. On the other hand,
integer programming (IP) solvers like {\em lp\_solve}~\cite{lpsolve} are much faster but will only find one solution 
to the state equation at a time. 

Another point of interest to know about IP solvers is the objective
function, which can be any function over the variables of the linear system to solve. 
While looking for a solution, the solver tries to minimize or maximize this function. Minimizing seems more
valuable here, since we might use the sum over all variables as our objective function, effectively telling
the solver to produce a solution that would lead to a shortest firing sequence if realizable.
In the following, we will assume such an objective function.

Fortunately, we can force an IP solver to produce more than just one solution --- this is the CEGAR part of our approach. 
If a solution found is not realizable, we may add an inequation to our state
equation to forbid that solution. Starting the IP solver again will then lead to a different solution. The trick is,
of course, to add inequations in such a way that no realizable solution is lost. 

\begin{defi}[Constraints]
Let $N=(S,T,F)$ be a Petri net.
We define two forms of constraints, both being linear inequations over transitions:
\begin{iteMize}{$\bullet$}
\item a {\em jump constraint} takes the form $t < n$ with $n\in\nat$ and $t\in T$. 
In general, it is intended to switch (jump)
to another base solution, exploiting the incomparability of different minimal base solutions.
\item an {\em increment constraint} takes the form $\sum_{i=1}^kn_it_i \ge n$ with $n_i\in\Z$, $n\in\nat$, and $t_i\in T$.
Among others, it can be used to force non-minimal solutions.
\end{iteMize}
\end{defi}

\noindent To understand the idea for differentiating between these two forms of constraints, it is necessary to introduce
the concept of a partial solution first. A partial solution is obtained from a solution of the state equation under
given constraints by firing as many transitions as possible.

\begin{defi}[Partial solution]
Let $N=(S,T,F)$ be a Petri net and $\ord$ a total order over $\nat^T$ that includes the partial order given by
$x<y$ if $\sum_{t\in T}x(t)<\sum_{t\in T}y(t)$. 

\noindent A {\em partial solution} of a reachability problem 
$(N,m,m')$ is a 
tuple 
$(\C,x,\sigma,r)$ 
of
\begin{iteMize}{$\bullet$}
\item a family of (jump and increment) constraints $\C=(c_1,\ldots,c_n)$,
\item the $\ord$-smallest solution $x$ fulfilling the state equation of $(N,m,m')$ {\em and} the constraints of $\C$,
\item a firing sequence $\sigma\in T^*$ with $m\step{\sigma}$ and $\parikh(\sigma)\le x$,
\item a remainder $r$ with $r = x-\parikh(\sigma)$ and $\forall t\in T$: $(r(t)>0\then\neg m\step{\sigma t})$.
\end{iteMize}
The vectors $x$ and $r$ are included for convenience only, they can be computed from $\C$, $\sigma$, $\ord$, and the problem instance.

A {\em full solution} is a partial solution $(\C,x,\sigma,r)$ with $r=0$. In this case, $\sigma$ is a firing
sequence solving our reachability problem (with answer 'yes').
\end{defi}

We choose $\ord$ such that an IP solver can be assumed to always 
produce the $\ord$-smallest solution that does not contradict its linear system of equations.
Note that from any firing sequence solving our reachability problem we can easily deduce a full solution:

\begin{cor}[Realizable solutions are full solutions]
For any realizable solution $x$ of the state equation (realized by a firing sequence $\sigma$)
we find a full solution $(\C,x,\sigma,0)$
where $\C$ consists of constraints $t\geq x(t)$ for every $t$ with $x(t)>0$, and $\parikh(\sigma)=x$.
\end{cor}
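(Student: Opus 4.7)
The plan is to exhibit the tuple $(\C, x, \sigma, 0)$ and verify each of the four clauses in the definition of a partial solution, after which the condition $r = 0$ immediately upgrades it to a full solution. Since each constraint $t \geq x(t)$ is of the form $\sum_i n_i t_i \geq n$ with a single summand, coefficient $1$, and right-hand side $x(t) \in \nat$, $\C$ is syntactically a family of increment constraints, so the first clause is immediate. The hypothesis gives $\parikh(\sigma) = x$, hence $m\step{\sigma}$ and $\parikh(\sigma) \leq x$, which dispatches the third clause; and the remainder $r = x - \parikh(\sigma) = 0$ disposes of the fourth clause vacuously, since the implication $r(t) > 0 \then \neg m\step{\sigma t}$ has a false antecedent.

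The only clause requiring a genuine argument is the second: that $x$ is the $\ord$-smallest solution of the state equation together with $\C$. The plan is a simple monotonicity observation. For any $y \in \nat^T$ satisfying $\C$, the constraints directly force $y(t) \geq x(t)$ whenever $x(t) > 0$; for those $t$ with $x(t) = 0$ we have $y(t) \geq 0 = x(t)$ trivially because solutions live in $\nat^T$. Hence $y \geq x$ componentwise. If in addition $y \neq x$, then some component strictly increases, so $\sum_{t \in T} y(t) > \sum_{t \in T} x(t)$; because $\ord$ by definition extends the partial order induced by the coordinate sum, this forces $x <_\ord y$. Thus $x$ is the unique $\ord$-minimal solution of the augmented system.

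Putting these four observations together yields a valid partial solution $(\C, x, \sigma, 0)$, and since $r = 0$ it is in fact a full solution by definition. I do not expect any real obstacle: the only non-bookkeeping step is the monotonicity argument for the second clause, and this rests solely on the fact that $\ord$ refines the coordinate-sum order, a property built into the definition of $\ord$.
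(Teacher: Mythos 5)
Your proposal is correct and takes essentially the same route as the paper: the paper's entire justification is the one-line remark that $x$ is the (componentwise) smallest solution satisfying $\C$ and hence the $\ord$-smallest, which is exactly your monotonicity argument for the second clause. The rest of your write-up just makes explicit the routine verification of the remaining clauses of the definition of a partial solution, which the paper leaves implicit.
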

Note, that $x$ is the smallest solution fulfilling $\Gamma$ and therefore also the $\ord$-smallest solution.

By adding a constraint to a partial solution we may obtain new partial solutions (or not, if the linear
system becomes infeasible). Any full solution can eventually be reached by consecutively extending an $\ord$-minimal 
partial solution with constraints. The following lemma is thus a core argument for the correctness of our approach.

\begin{lem}[A path to a full solution]\label{L.PTFL}
Let $b$ be the $\ord$-minimal solution of the state equation of a reachability problem $(N,m,m')$ and 
$ps'=((c_j)_{1\le j\le\ell},b'+\sum_{i=1}^kn_ip_i,\sigma',0)$ a full solution of the problem. 
For $0\le n\le\ell$, there are partial solutions $ps_n=((c_j)_{1\le j\le n},x_n,\sigma_n,r_n)$
with $ps_0=(\emptyset,b,\sigma_0,r_0)$, $ps_\ell=ps'$, and $x_{n_1}\le_{\ord} x_{n_2}$ for $n_1\le n_2$.
\end{lem}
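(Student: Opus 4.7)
The plan is to define each $ps_n$ directly from the prefix $c_1,\ldots,c_n$ of constraints, rather than by extending the previous $ps_{n-1}$. First I would verify that for every $0\le n\le\ell$, the state equation together with $c_1,\ldots,c_n$ is feasible: indeed $b'+\sum_{i=1}^k n_i p_i$ satisfies the full list $c_1,\ldots,c_\ell$ and hence every prefix, so the $\ord$-smallest non-negative solution $x_n$ of the restricted system is well defined (and unique, because $\ord$ is total). The hypothesis on $b$ then gives $x_0=b$.

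Given $x_n$, I would produce $\sigma_n$ by a greedy firing procedure: starting from $m$, repeatedly fire any enabled transition $t$ whose current Parikh count is still strictly below $x_n(t)$, stopping when no such transition is enabled. This terminates because $\sum_t(x_n(t)-\parikh(\sigma_n)(t))$ strictly decreases in $\nat$, and setting $r_n:=x_n-\parikh(\sigma_n)$ turns the quadruple into a partial solution by construction. For the two distinguished indices, $x_0=b$ together with any such $\sigma_0$ yields the required $ps_0=(\emptyset,b,\sigma_0,r_0)$; and at $n=\ell$ the assumption that $ps'$ itself is a partial solution forces $x_\ell=b'+\sum_{i=1}^k n_i p_i$, so taking $\sigma_\ell:=\sigma'$ (whose remainder is already $0$) gives $ps_\ell=ps'$.

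For the monotonicity claim $x_{n_1}\le_{\ord} x_{n_2}$ when $n_1\le n_2$, I would observe that the feasible set under $c_1,\ldots,c_{n_2}$ is contained in the feasible set under $c_1,\ldots,c_{n_1}$, so the $\ord$-minimum can only grow as constraints are added. The main thing to keep straight, and the point I expect to need the most care in the write-up, is that the $\sigma_n$ are \emph{not} claimed to form a chain of firing sequences: different prefixes of the constraint list may favour very different greedy interleavings, and the lemma asserts only monotonicity of the $x_n$, which is precisely what the CEGAR loop will exploit.
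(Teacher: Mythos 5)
Your proposal is correct and follows essentially the same route as the paper's proof: feasibility of each prefix system is witnessed by the full solution's vector, the $\ord$-minimum $x_n$ is monotone because adding constraints only shrinks the feasible set, and $\sigma_n$ is obtained by firing as long as possible. Your explicit remarks that $x_\ell$ is forced to equal $b'+\sum_{i=1}^k n_i p_i$ by the definition of a partial solution, and that the $\sigma_n$ need not form a chain, are welcome clarifications the paper leaves implicit.
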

\begin{proof}
Let $\C_n=(c_j)_{1\le j\le n}$.
If $ps_{n_1},ps_{n_2}$ are two partial solutions (with $1\le n_1<n_2\le\ell$) then $x_{n_2}$ is a solution
of the state equation plus $\C_{n_1}$, since it even fulfills the state equation plus $\C_{n_2}$ with $\C_{n_1}\subseteq \C_{n_2}$.
As $x_{n_1}$ is the $\ord$-smallest solution of the state equation plus $\C_{n_1}$, $x_{n_1}\le_{\ord} x_{n_2}$ holds.
Therefore, $b\le_{\ord} x_1\le_{\ord} \ldots \le_{\ord} x_\ell$. Since $x_\ell=b'+\sum_{i=1}^kn_ip_i$ is an existing solution of the
strictest system, i.e. state equation plus $\C_\ell$, each system of state equation plus one family of constraints $\C_n$
is solvable. As a $\sigma_n$ can be determined by just firing transitions as long as possible, all the partial
solutions $ps_n$ exist.
\end{proof}

Now, let us assume a partial solution $ps=(\C,x,\sigma,r)$ that is not a full solution, i.e. $r\neq 0$.
Obviously, some transitions cannot fire often enough. There are three possible remedies for this situation:
\begin{enumerate}[(1)]
\item $x$ may still be realizable by a different firing sequence that corresponds to $x$. That is, we can find a full solution $ps'=(\C,x,\sigma',0)$ with $\parikh(\sigma')=x$.
\item We can add a jump constraint to obtain an $\ord$-greater solution vector for a different partial solution.
\item If $r(t)>0$ for some transition $t$, we can add an increment constraint to increase the maximal number of tokens available on
	a place in the preset of $t$. Since the final marking remains the same, this means to borrow tokens for such a place. This
	can be done by adding a $T$-invariant containing the place to the solution. 
\end{enumerate}\smallskip

\noindent A visualization of these ideas can be seen in Fig.~\ref{f.solpath} where $b$ denotes the $\ord$-smallest solution.
The cone over $b$ represents all solutions $b+P^*$ with $P$ being the set of period vectors, i.e. $T$-invariants.
Jump constraints lead along the dashed or dotted lines to the next $\ord$-minimal solution while normal arrows representing
increment constraints lead upwards to show the addition of a $T$-invariant. How to build constraints doing just
what we want them to do is the content of the next section.

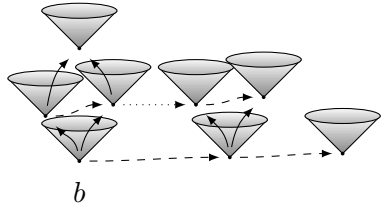
\begin{figure}[tb]
\centering
\begin{tikzpicture}[scale=0.5]
\shade[top color=black!1, bottom color=black!50] (1,1)--(0,0)--(-1,1)--cycle;
\draw(-1,1)--(0,0)--(1,1);
\shadedraw[top color=black!1, bottom color=black!30] (0,1) ellipse (1 and 0.2);
\fill[color=black] (0,0) circle (0.05);
\draw(0,0) node[label=below:$b$] (c1) {};

\shade[top color=black!1, bottom color=black!50] (5,1.1)--(4,0.1)--(3,1.1)--cycle;
\draw(3,1.1)--(4,0.1)--(5,1.1);
\shadedraw[top color=black!1, bottom color=black!30] (4,1.1) ellipse (1 and 0.2);
\fill[color=black] (4,0.1) circle (0.05);
\draw(4,0.1) node (c2) {};

\shade[top color=black!1, bottom color=black!50] (8,1.2)--(7,0.2)--(6,1.2)--cycle;
\draw(6,1.2)--(7,0.2)--(8,1.2);
\shadedraw[top color=black!1, bottom color=black!30] (7,1.2) ellipse (1 and 0.2);
\fill[color=black] (7,0.2) circle (0.05);
\draw(7,0.2) node (c3) {};

\shade[top color=black!1, bottom color=black!50] (0.1,2.2)--(-0.9,1.2)--(-1.9,2.2)--cycle;
\draw(-1.9,2.2)--(-0.9,1.2)--(0.1,2.2);
\shadedraw[top color=black!1, bottom color=black!30] (-0.9,2.2) ellipse (1 and 0.2);
\fill[color=black] (-0.9,1.2) circle (0.05);
\draw(-0.9,1.2) node (c11) {};

\shade[top color=black!1, bottom color=black!50] (1.9,2.5)--(0.9,1.5)--(-0.1,2.5)--cycle;
\draw(-0.1,2.5)--(0.9,1.5)--(1.9,2.5);
\shadedraw[top color=black!1, bottom color=black!30] (0.9,2.5) ellipse (1 and 0.2);
\fill[color=black] (0.9,1.5) circle (0.05);
\draw(0.9,1.5) node (c12) {};

\shade[top color=black!1, bottom color=black!50] (4.1,2.5)--(3.1,1.5)--(2.1,2.5)--cycle;
\draw(2.1,2.5)--(3.1,1.5)--(4.1,2.5);
\shadedraw[top color=black!1, bottom color=black!30] (3.1,2.5) ellipse (1 and 0.2);
\fill[color=black] (3.1,1.5) circle (0.05);
\draw(3.1,1.5) node (c21) {};

\shade[top color=black!1, bottom color=black!50] (5.9,2.7)--(4.9,1.7)--(3.9,2.7)--cycle;
\draw(3.9,2.7)--(4.9,1.7)--(5.9,2.7);
\shadedraw[top color=black!1, bottom color=black!30] (4.9,2.7) ellipse (1 and 0.2);
\fill[color=black] (4.9,1.7) circle (0.05);
\draw(4.9,1.7) node (c22) {};

\shade[top color=black!1, bottom color=black!50] (1,4)--(0,3)--(-1,4)--cycle;
\draw(-1,4)--(0,3)--(1,4);
\shadedraw[top color=black!1, bottom color=black!30] (0,4) ellipse (1 and 0.2);
\fill[color=black] (0,3) circle (0.05);
\draw(0,3) node (c111) {};

\draw[arrow] (c1) to[out=100,in=315] (c11);
\draw[arrow] (c1) to[out=80,in=225] (c12);
\draw[arrow] (c11) to[out=80,in=225] (c111);
\draw[arrow] (c12) to[out=100,in=315] (c111);
\draw[arrow] (c2) to[out=100,in=315] (c21);
\draw[arrow] (c2) to[out=80,in=225] (c22);
\draw[arrow,dashed] (c1) to[out=0,in=180] (c2);
\draw[arrow,dashed] (c2) to[out=0,in=180] (c3);
\draw[arrow,dashed] (c11) to[out=0,in=180] (c12);
\draw[arrow,dotted] (c12) to[out=0,in=180] (c21);
\draw[arrow,dashed] (c21) to[out=0,in=180] (c22);
%\draw[arrow] (c1) to[out=0,in=180,dashed] (c2);
%\draw[arrow] (c2) to[out=0,in=180,dashed] (c3);
%\draw[arrow] (c11) to[out=0,in=180,dashed] (c12);
%\draw[arrow] (c12) to[out=0,in=180,dotted] (c21);
%\draw[arrow] (c21) to[out=0,in=180,dashed] (c22);
\end{tikzpicture}
\caption{\label{f.solpath}Paths from the $\ord$-minimal solution $b$ to any solution. Black dots represent solutions, cones
stand for linear solution spaces over such solutions, which may or may not intersect or include each other.
Normal arrows increment a solution by adding a $T$-invariant, dashed arrows are jumps to an incomparable $\ord$-greater solution.
Such jumps can also occur on higher levels of linear solution spaces, shown by the dotted arrow}
\end{figure}

\section{Building Constraints}\label{sec4}

Let us first argue that for a state equation, any of the minimal solution vectors in $B$ can be obtained by using jump
constraints.

\begin{lem}[Jumps to minimal solutions]\label{L.BJ}
Let $b,b'\in B$ be base vectors of the solution space of the state equation $m+\I x=m'$ plus some set of constraints $\C$. 
Assume $b$ to be the $\ord$-minimal solution of the system. 
Then, we can obtain $b'$ as output of our IP solver by consecutively adding jump constraints of the form $t_i < n_i$ with $n_i\in\nat$ to $\C$.
\end{lem}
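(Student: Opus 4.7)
My plan is to argue by iteration: I maintain a current extended system, look at its $\ord$-smallest solution $x$, and if $x\neq b'$ I add a single jump constraint that excludes $x$ but keeps $b'$ feasible. Starting from $x=b$, I expect to reach $b'$ after finitely many such steps.

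The key structural fact I will invoke is that whenever $x$ is a solution of the original system (state equation plus $\C$) with $x\neq b'$, the vector $x$ cannot be componentwise below $b'$: otherwise $x<b'$ componentwise would contradict the minimality of $b'\in B$ among solutions of that system. Consequently there must exist a transition $t$ with $x(t)>b'(t)$, and I will add the jump constraint $t<x(t)$. By choice of $t$, $b'$ still satisfies it, whereas $x$ does not (since $x(t)\not<x(t)$), so after this step the IP solver is forced to output a different vector.

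Next I will argue progress. Let $x'$ denote the new $\ord$-smallest solution. Every solution of the new system also solves the old one, so $x'\ge_{\ord}x$; and since $x$ is explicitly excluded, the inequality is strict, $x'>_{\ord}x$. At the same time $b'$ remains feasible for all accumulated constraints, so $x'\le_{\ord}b'$. Each iteration therefore produces an $\ord$-strictly-larger current minimum that is still sandwiched $\ord$-below $b'$.

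For termination, I will exploit that $\ord$ extends the order by component sum: $x'\le_{\ord}b'$ rules out $\sum_t x'(t)>\sum_t b'(t)$, so $\sum_t x'(t)\le \sum_t b'(t)$. Only finitely many vectors in $\nat^T$ have component sum bounded by $\sum_t b'(t)$, so a strictly $\ord$-increasing sequence inside this finite set must end, and the only way it can end is with the current $\ord$-minimum being $b'$ itself. I expect the termination bookkeeping---tying the abstract total order $\ord$ back to the concrete sum bound---to be the main obstacle; the rest of the argument is essentially forced by the componentwise incomparability of the base vectors in $B$ guaranteed by the solution-space corollary.
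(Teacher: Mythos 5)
Your proposal is correct and follows essentially the same route as the paper's proof: pick a transition $t$ where the current $\ord$-minimal solution $x$ exceeds $b'$ (such a $t$ exists because base vectors are componentwise minimal/incomparable), add the jump constraint $t<x(t)$, which excludes $x$ but keeps $b'$ feasible, and terminate because only finitely many vectors lie $\ord$-below $b'$. Your explicit justification of that finiteness via the component-sum bound built into $\ord$ is a detail the paper leaves implicit, but it is not a different argument.
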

\begin{proof}
We know $b\le_{\ord} b'$ holds, but since $b'$ is a minimal solution, $b\le b'$ cannot hold. Therefore, a transition $t$ with
$b'(t)<b(t)$ must exist. After adding the constraint $t<b(t)$ to $\C$
the IP solver can no longer generate $b$ as a solution. Assume $b''$ is the newly generated solution. If $b'=b''$
we are done. Otherwise, since $b'$ fulfills $t<b(t)$, it is still a solution of our system, and also a minimal one as the
solution space is restricted by the added constraint. Thus, $b''\le_{\ord} b'$ holds and
we may recursively use the same argument as above for $b:=b''$. Since there are only finitely many solutions $\ord$-smaller
than $b'$, the argument must terminate reaching $b'$.
\end{proof}

Non-minimal solutions may not be reachable this way, since the argument ``$b'(t)<b(t)$ for some $t$'' does not necessarily hold.
We will need increment constraints for this, but unluckily, increment constraints and jump constraints may 
contradict each other. Assume our state equation has a solution of the form $b'+p$ with a period vector $p\in P$
and to obtain $b'\in B$ from the $\ord$-minimal solution $b\in B$ we need to add (at least) a jump constraint $t_i<n_i$ to the state equation.
If $p$ contains $t_i$ often enough, we will find that $(b'+p)(t_i)\ge n_i$ holds. Therefore, $b'+p$ is not a solution of the
state equation plus the constraint $t_i<n_i$, i.e. adding an increment constraint demanding enough occurrences of $t_i$
for $b'+p$ will render the linear equation system infeasible. The only way to avoid this problem is to remove the jump
constraints before adding increment constraints.

\begin{lem}[Transforming jumps]
Let $z$ be the $\ord$-minimal solution of the state equation $m+\I x=m'$ plus some constraints $\C$. Let $\C'$ consist of all
increment constraints of $\C$ plus a constraint $t\ge z(t)$ for each transition $t$. Then, for all $y\ge z$,
$y$ is a solution of $m+\I x=m'$ plus $\C\cap\C'$ if and only if $y$ is a solution of $m+\I x=m'$ plus $\C'$.
Furthermore, no $\ord$-smaller solution of $m+\I x=m'$ plus $\C$ than $z$ solves $m+\I x=m'$ plus $\C'$.
\end{lem}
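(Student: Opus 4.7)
The plan is first to identify precisely what $\C\cap\C'$ looks like. By definition, $\C'$ is built only from increment-type constraints: the increment constraints inherited from $\C$, plus the constraints $t\ge z(t)$, which can be written as the increment inequation $1\cdot t\ge z(t)$. The jump constraints of $\C$ (inequations of the form $t<n$) have no counterpart in $\C'$, so they are not in $\C\cap\C'$. Consequently $\C\cap\C'$ equals exactly the collection of increment constraints of $\C$, and passing from $\C$ to $\C\cap\C'$ amounts to stripping away the jumps.

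Given that, the biconditional is essentially an unwinding of definitions once the hypothesis $y\ge z$ is used in the right place. In one direction, suppose $y\ge z$ and $y$ satisfies the state equation together with $\C\cap\C'$; then $y$ already satisfies all increment constraints of $\C$, and the only additional obligations imposed by $\C'$ are the inequations $t\ge z(t)$, which follow immediately from $y(t)\ge z(t)$ for every $t\in T$. Conversely, any solution of state equation plus $\C'$ is automatically a solution of state equation plus $\C\cap\C'$, since $\C\cap\C'\subseteq\C'$.

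For the second assertion, I would argue directly that every solution of state equation plus $\C'$ is at least $z$, first componentwise and then in $\ord$. Any such $y$ satisfies the explicit constraints $t\ge z(t)$ of $\C'$, so $y\ge z$ componentwise. If $y\neq z$, then $\sum_{t\in T}y(t)>\sum_{t\in T}z(t)$; since $\ord$ is a total order that extends the partial order defined by the component sum, this forces $y>_{\ord} z$. Hence no solution of state equation plus $\C'$ is strictly $\ord$-smaller than $z$, which is the content of the ``furthermore'' clause (in particular, no $\ord$-smaller solution of state equation plus $\C$ than $z$ could possibly lie in the solution set of state equation plus $\C'$).

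The main obstacle, which is mild, is just the bookkeeping around what is and is not an increment constraint: once one notices that the new inequations $t\ge z(t)$ are themselves increment constraints and that jump constraints are not, the claim about $\C\cap\C'$ falls out, and the componentwise lower bound $y\ge z$ coming from $\C'$ does double duty in both the biconditional and the $\ord$-minimality claim.
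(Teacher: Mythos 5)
Your proof is correct and follows essentially the same route as the paper's: the forward direction reduces to checking the added constraints $t\ge z(t)$ against $y\ge z$, the reverse direction is containment of constraint sets, and the final claim rests on the fact that $\ord$ extends the component-sum order. The only cosmetic differences are that you make the identification of $\C\cap\C'$ explicit and phrase the second part in contrapositive form (every solution of $\C'$ is $\ge_{\ord} z$, rather than showing a hypothetical $\ord$-smaller $z'$ must violate some $t\ge z(t)$), which if anything yields a marginally stronger statement.
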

\begin{proof}
Let $y\ge z$ be a solution of $m+\I x=m'$ plus $\C\cap\C'$. The additional constraints in $\C'$ only demand $y(t)\ge z(t)$, which
is obviously the case. The other direction is trivial. For the second part, let $z'\le_{\ord} z$ with $z\not=z'$ be some solution of
$m+\I x=m'$ plus $\C$. Since $\sum_tz'(t)\le\sum_tz(t)$ (following from $\ord$) but $z\not=z'$, for at least one transition $t$ holds $z'(t)<z(t)$. 
Due to the constraint $t\ge z(t)$ in $\C'$, $z'$ cannot be a solution of $m+\I x=m'$ plus $\C'$.
\end{proof}

As a consequence, if we are only interested in solutions of the cone $z+P^*$ over $z$, we can add increment constraints guaranteeing
solutions greater or equal than $z$ and remove all jump constraints without any further restriction. Our IP solver will yield $z$ as the $\ord$-minimal
solution for both families of constraints, $\C$ and $\C'$, and we can add further constraints leading us to any solution in the
cone $z+P^*$ now.

Let $ps=(\C,x,\sigma,r)$ now be a partial solution with $r>0$. We would like to determine sets of places that need additional tokens
(and the number of these tokens) that would enable us to fire the remainder $r$ of transitions. Obviously, this problem is harder
than the original problem of finding out if a transition vector is realizable, i.e. just testing if zero additional tokens are
sufficient. A recursive approach would probably be very inefficient as for every solution $x$ there may be many different remainders $r$.
Even though the remainders are smaller than the solution vector $x$, the number of recursion steps might easily grow exponentially with
the size of $x$, i.e. $\sum_tx(t)$. We therefore adopt a different strategy, namely finding good heuristics to estimate the number of
tokens needed. If a set of places actually needs $n$ additional tokens with $n>0$, our estimate may be any number from one to $n$.
If we guess too low, we will obtain a new partial solution allowing us to make a guess once again, (more or less) slowly approaching
the correct number. We propose a two-part algorithm.The first part computes sets of places and transitions that are of interest, the second
estimates the number of tokens. For the first part, we use a dependency graph that is known from partial order reduction approaches \cite{valmari2009}.
For every disabled transition, we can choose an insufficiently marked place. For every such place, we consider its pre-transitions. Applying this
idea to a partial solution (where all transitions in $r$ are disabled), this graph yields cycles, or more generally strongly connected components) of mutually blocked 
transitions and their insufficiently marked scapegoat places. In order to make parts of $r$ fireable, we need to interleave $r$ with a T-invariant that
is able to lend tokens to any of the involved scapegoat places. Obviously, source SCC (i.e.~SCC without incoming edges) are of particular
interest for enabling parts of $r$. The following algorithm computes sets of places where additional tokens are necessary.

\smallskip{\small\noindent
{\tt
{\bf input}: Reachability prob. $(N,m,m')$; partial solution $ps=(\C,x,\sigma,r)$\\
{\bf output}: A set of tuples $(S_i,T_i,X_i)$ with $S_i\subseteq S$, $T_i\cup X_i\subseteq T$\\
Determine $\hat{m}$ with $m\step{\sigma}\hat{m}$;\\
Build a bipartite graph $G=(S_0\cup T_0,E)$ with\\
$T_0:=\{t\in T\,|\,r(t)>0\}$; $S_0:=\{s\in S\,|\,\exists t\in T_0$: $F(s,t)>\hat{m}(s)\}$;\\
$E:=\{(s,t)\in S_0\times T_0\,|\,F(s,t)>\hat{m}(s)\}\cup\{(t,s)\in T_0\times S_0\,|\,F(t,s)>F(s,t)\}$;\\
Calculate the strongly connected components ($SCCs$) of $G$;\\
$i:=1$;\\
{\bf for each} source $SCC$ (i.e.~one without incoming edges):\\
\hspace*{3mm} $S_i := SCC \cap S_0$;\\
\hspace*{3mm} $T_i := SCC \cap T_0$;\\
\hspace*{3mm} $X_i := \{t\in T_0\backslash SCC\,|\,\exists s\in S_i:\;(s,t)\in E\}$;\\
\hspace*{3mm} $i := i+1$;\\
{\bf end for}
}}\smallskip

The edges of the graph $G$ constructed in the algorithm have a different meaning depending on their direction.
Edges from transitions to places signal that the transition would increase the number of tokens on the place
upon firing, while edges in the other direction show the reason for the non-enabledness of the transition.
A source SCC, i.e. a strongly connected component without incoming edges from other components, can therefore
not obtain tokens by the firing of transitions from other SCCs. This means, tokens must come from somewhere else, 
that is, from firing transitions
not appearing in the remainder $r$. For each set of places $S_i$ such identified as non-markable by the remainder itself,
there are two sets of transitions. If one transition from the set $T_i$ would become firable, it is possible that
all other transitions could fire as well, since the former transition effectively produces tokens on some place in
the component. If the set $T_i$ is empty (the SCC consisting of a single place), the transitions mentioned in $r$
will not produce any tokens on the SCC. Thus, the token needs of the transitions depending on this SCC, i.e.\
those in $X_i$, must all be fulfilled together, since they cannot activate each other. Overall, we obtain:
\begin{lem}\label{L.ALG1}
The previous algorithm determines source SCCs $(S_i,T_i)$ with insufficient numbers of tokens to enable the
remainder $r$ of a partial solution as well as sets $X_i$ of transitions depending on those SCCs for firing.
\end{lem}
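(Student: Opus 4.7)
The plan is to verify that the algorithm's construction faithfully captures the intended combinatorial property of insufficiency, by unfolding the semantics of the bipartite graph $G$. First I would spell out what the edges mean with respect to the marking $\hat m$ reached after firing $\sigma$: an edge $(s,t)\in S_0\times T_0$ witnesses $F(s,t)>\hat m(s)$, i.e.\ place $s$ is a concrete ``scapegoat'' place blocking transition $t$ (from the remainder), while an edge $(t,s)\in T_0\times S_0$ witnesses $F(t,s)>F(s,t)$, i.e.\ firing $t$ strictly increases the number of tokens on $s$. Thus an incoming edge into a vertex set $S\subseteq S_0$ precisely identifies a transition from $T_0$ that can deposit a token onto $S$.

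Next I would argue the three asserted properties for a fixed source SCC, yielding the triple $(S_i,T_i,X_i)$. (i) The places in $S_i\subseteq S_0$ are by construction under-marked in $\hat m$: for each $s\in S_i$ there is some $t\in T_0$ with $F(s,t)>\hat m(s)$, which witnesses the token deficit. (ii) The tokens needed on $S_i$ cannot originate from firings of transitions in $T_0\setminus SCC$. Indeed, suppose $t\in T_0\setminus SCC$ satisfied $F(t,s)>F(s,t)$ for some $s\in S_i$; then $(t,s)$ would be an edge of $G$ entering the component, contradicting sourceness. Hence no transition outside the SCC (that also lies in the remainder) produces net tokens on any place of $S_i$, so additional tokens must come from outside $r$, i.e.\ via a $T$-invariant interleaved with $\sigma$. (iii) For the transitions in $X_i$: by definition these are the elements of $T_0$ not in the SCC that have an in-edge from some $s\in S_i$, so they are blocked precisely because of the deficit on $S_i$. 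In the degenerate case $T_i=\emptyset$ (a single-place source SCC), no transition in the remainder can ever put a token on $S_i$, so every transition in $X_i$ remains blocked until an external source supplies the missing tokens simultaneously; in the case $T_i\neq\emptyset$, firing any transition of $T_i$ injects tokens into $S_i$, which by strong connectedness may then propagate to enable the rest of $T_i$ (and subsequently the transitions in $X_i$).

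Finally I would note that the iteration over source SCCs is well-defined because the SCC quotient of a finite directed graph is a DAG, hence has at least one source whenever $G$ is nonempty, and the algorithm enumerates all of them. Combining (i)--(iii), each produced tuple $(S_i,T_i,X_i)$ satisfies the claim of the lemma.

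The main obstacle is conceptual rather than computational: one has to be careful that the condition ``insufficient tokens to enable $r$'' is captured by sourceness of the SCC in $G$ rather than by some stronger numerical bound. The argument above relies only on the qualitative fact that no edge enters the SCC from outside, which suffices to conclude that the remainder alone cannot repair the deficit; the actual number of missing tokens is deliberately left to the second phase of the algorithm and is not part of this lemma.
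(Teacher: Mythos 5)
Your argument is correct and follows essentially the same route as the paper, which justifies this lemma only by the informal discussion preceding it (the meaning of the two edge directions in $G$, the observation that a source SCC cannot receive tokens from remainder transitions outside it, and the case split on $T_i=\emptyset$ versus $T_i\neq\emptyset$). Your write-up merely makes that prose explicit, including the correct observation that sourceness, not a numerical bound, is what certifies the deficit, with the quantitative estimate deferred to Lemma~\ref{L.ALG2}.
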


Since enough additional tokens to enable one transition from a set $T_i$ might later enable the rest of $T_i$ and
$X_i$ as well, it is difficult to compute the exact number of tokens needed on some SCC and where to place them.
The following algorithm thus is a heuristic to determine a least number of tokens necessary on a source SCC.

\smallskip{\small\noindent
{\tt
{\bf input}: A tuple $(S_i,T_i,X_i)$; $(N,m,m')$ and $\hat{m}$ from above\\
{\bf output}: A number of tokens $n$ (additionally needed for $S_i$)\\
{\bf if} $T_i\neq\emptyset$\\
{\bf then} $n:=\min_{t\in T_i}(\sum_{s\in S_i}\max\{0,(F(s,t)-\hat{m}(s))\})$\\
{\bf else} sort $X_i$ in groups $G_j:=\{t\in X_i\,|\,F(t,s)=j\}$ (with $S_i=\{s\}$);\\
\hspace*{7mm} $n:=0$; $c:=0$;\\
\hspace*{7mm} {\bf for} $j$ {\bf with} $G_j\neq\emptyset$ {\bf downwards loop}\\ 
\hspace*{14mm} $c:=c+j+\sum_{t\in G_j}r(t)*(F(s,t)-j)$;\\
\hspace*{14mm} {\bf if} $c>0$ {\bf then} $n:=n+c$ {\bf end if};\\
\hspace*{14mm} $c:=-j$\\
\hspace*{7mm} {\bf end for}\\
{\bf end if}
}}\smallskip

\begin{lem}\label{L.ALG2}
For each set of places $S_i$ that need additional tokens according to the first part of the algorithm, the second
part estimates that number of tokens (in a range from one to the actual minimum number of tokens necessary).
\end{lem}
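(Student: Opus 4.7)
My plan is to establish the two inequalities $n\ge 1$ and $n\le M$ (with $M$ the actual minimum number of tokens needed) separately, splitting on whether $T_i$ is empty. The non-triviality $n\ge 1$ is straightforward. In the first branch ($T_i\neq\emptyset$), any $t\in T_i$ satisfies $r(t)>0$, and by the partial-solution condition $t$ is disabled under $\hat m$; hence some pre-place of $t$ lies in $S_i$ with deficit at least one, so $\sum_{s\in S_i}\max\{0,F(s,t)-\hat m(s)\}\ge 1$ for every $t\in T_i$, giving $n\ge 1$. In the second branch ($T_i=\emptyset$, $S_i=\{s\}$), the first loop iteration processes the largest $j$; inspection of $c=j+\sum_{t\in G_j}r(t)(F(s,t)-j)$ shows $c\ge 1$ (either $j\ge 1$, or else $j=0$ and any $t\in G_0$ has $F(s,t)\ge 1$ since $F(s,t)>\hat m(s)\ge 0$), so the branch $n:=n+c$ is taken.

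For $n\le M$ in the first branch, I would fix an execution that realizes $r$ after injecting $M$ additional tokens onto $S_i$ and examine the first firing of some $t^*\in T_i$. The key structural fact is the source SCC property: no transition in $T_0\setminus T_i$ has a net-positive effect on any $s\in S_i$, since such a transition would contribute an edge $(t,s)\in E$ pointing into the SCC. Hence the tokens on each $s\in S_i$ just before $t^*$ fires cannot exceed $\hat m(s)$ plus whatever portion of the injection went to $s$. The firing condition for $t^*$ then forces the injection on each $s$ to cover the deficit $\max\{0,F(s,t^*)-\hat m(s)\}$, and summing over $S_i$ and minimizing over $t^*\in T_i$ yields $M\ge n$.

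The second branch will be the main obstacle. I would proceed by a peak-demand analysis on the single place $s$: each firing of $t\in G_j$ consumes $F(s,t)$ tokens and returns $j$, so firing a whole group $G_j$ contiguously induces a peak demand of $j+\sum_{t\in G_j}r(t)(F(s,t)-j)$ tokens at its start and leaves exactly $j$ tokens at its end. A standard adjacent-swap exchange argument shows that ordering the groups by decreasing $j$ minimizes the overall peak, which is precisely the order the algorithm follows. Each positive value $c$ added to $n$ then corresponds to a shortfall that any optimal schedule must cover, bounding $n\le M$. The delicate step, and the hardest part of the proof, is that the algorithm resets $c:=-j$ after each group and thus forgets any surplus beyond $j$ tokens carried over; one must argue carefully that each increment $n:=n+c$ can nonetheless be charged to a genuine unavoidable deficit in an optimal execution rather than to this conservative tracking.
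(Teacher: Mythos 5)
Your plan mirrors the paper's proof almost step for step: the same case split on $T_i=\emptyset$, the same argument that $n\ge 1$ (each remainder transition is disabled under $\hat m$, and since its blocking places all lie inside the source SCC, the deficit summed over $S_i$ is positive; respectively, the first loop iteration makes $c>0$), and for $T_i\neq\emptyset$ the same upper bound via the first transition of $T_i$ that ever becomes enabled. On that last point you are actually more explicit than the paper, whose entire justification is ``any lower number would not activate any transition from $T_i$'': the argument you supply --- no transition of $T_0$ outside a source SCC has a net positive effect on any $s\in S_i$, so the deficit of $t^*$ must be covered entirely by injected tokens --- is exactly the missing link, and it is only discussed informally in the text preceding the algorithm, not in the proof itself.

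The ``delicate step'' you flag in the $T_i=\emptyset$ branch is precisely the step the paper does not prove either; its proof simply asserts that processing groups in decreasing order of $j$ with the carryover $c$ ``gives the minimal number of tokens necessary.'' Your suspicion about the reset $c:=-j$ is well founded and cannot be argued away: when a group's net consumption is smaller than the surplus actually available, the algorithm remembers only $j$ leftover tokens instead of the true leftover, and later increments of $n$ are then charged against a deficit that an optimal schedule does not incur. Concretely, take $S_i=\{s\}$, $\hat m(s)=0$, and three transitions with $\bigl(F(s,t),F(t,s),r(t)\bigr)$ equal to $(10,10,1)$, $(5,5,1)$, $(21,1,1)$: the loop yields $c=10$ (so $n=10$), then $c=-5$, then $c=16$ (so $n=26$), whereas $21$ injected tokens suffice, since the two loop-like transitions leave all $21$ tokens in place before the last one consumes $20$. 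So the upper-bound half of the lemma fails for this branch as the algorithm stands (it would hold if $c$ carried the full leftover rather than being reset to $-j$), and neither your exchange argument nor the paper's one-line assertion can close it. Note that the direction that matters for not losing solutions is precisely not overestimating, so this is a substantive, not cosmetic, gap --- but it is the paper's gap as much as yours; everything else in your proposal is sound and coincides with the published argument.
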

\begin{proof}
Consider a triple ($S_i$,$T_i$,$X_i$) computed by the first algorithm.
If $T_i$ is not empty, only line~4 of the second algorithm is executed, computes the number of tokens missing
for each of the transitions in $T_i$, and takes the minimum over these numbers. By construction, this number is
at least one and any lower number would not activate any transition from $T_i$. 

If $T_i$ is empty, all transitions in $X_i$ depend on a single place which must provide all the
necessary tokens to fire all the transitions in $X_i$. 
Note that while the transitions in $X_i$ all effectively consume tokens from $s\in S_i$, they may also put tokens back onto this
place due to a loop. By firing those transitions with the lowest $F(t,s)$-values last, we minimize the leftover.
Transitions with the same $F(t,s)$-value $j$ can be processed together, each consuming effectively $F(s,t)-j$ tokens
except for the ``first'' transition which requires the existence of $j$ additional tokens. 
If some group $G_j$ of transitions leaves tokens
on $s$, the next group can consume them, which is memorized in the variable $c$ (for carryover or consumption).
Overall, we get the minimal number of tokens necessary to fire all transitions in $X_i$.

Observe, that the algorithm cannot return zero or negative values: 
There must be at least one transition in $T_i\cup X_i$, otherwise
there would be no transition that cannot fire due to a place in $S_i$ and the places in $S_i$ would not have been
computed at all. If $T_i$ is not empty, line~4 in the algorithm minimizes over positive values, i.e.\ the
numbers of tokens missing for each transition in $T_i$; if $T_i$
is empty, line~8 will set $c$ to a positive value at its first execution, yielding a positive value for $n$. 
\end{proof}

We can thus try to construct a constraint from a set of places $S_i$ generated by the first part of the algorithm
and the token number calculated in the second part. Since our state equation has transitions as variables, we
must transform our condition on places into one on transitions first. 
%The number of tokens on a place is easily calculated
%from the initial marking and the transitions that change the number of tokens.

\begin{lem}\label{L.CON}
Let $N=(S,T,F)$ be a Petri net, $(N,m,m')$ the reachability problem to be solved, $ps=(\C,x,\sigma,r)$ a partial solution
with $r>0$, and $\hat{m}$ the marking reached by $m\step{\sigma}\hat{m}$.
Let $S_i$ be a set of places and $n$ a number of tokens to be generated on $S_i$.
Further, let $T_i:=\{t\in T\,|\,r(t)=0\,\wedge\,\sum_{s\in S_i}(F(t,s)-F(s,t))>0\}$.
We define a constraint $c$ by
\[ \sum_{t\in T_i} \sum_{s\in S_i} (F(t,s)-F(s,t))t \ge n+\sum_{t\in T_i}\sum_{s\in S_i}(F(t,s)-F(s,t))\parikh(\sigma)(t).\]
Then, for the system $m+\I x=m'$ plus $\C$ plus $c$, if our IP solver can generate a solution $x+y$ ($y$ being a $T$-invariant) 
we can obtain a partial solution $ps'=(\C\cup\{c\},x+y,\sigma\tau,r+z)$ with $\parikh(\tau)+z=y$. Furthermore,
$\sum_{t\in T}\sum_{s\in S_i}(F(t,s)-F(s,t))y(t)\ge n$.
\end{lem}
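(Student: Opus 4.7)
The plan is to verify the two assertions of the lemma separately: that the tuple $ps'$ as described is a bona fide partial solution of the augmented system, and that the stated lower bound on the contribution of the $T$-invariant $y$ to the places in $S_i$ follows from the constraint $c$.

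For the construction of $ps'$, I would extend $\sigma$ greedily. Starting from $\hat m$, repeatedly fire any transition $t$ that is currently enabled and still has budget left, i.e.~$\parikh(\sigma\tau')(t) < (x+y)(t)$ for the already fired prefix $\tau'$ of $\tau$; iterate until no such transition is enabled. The process terminates because the overall budget $(x+y)-\parikh(\sigma) = r+y$ is a fixed vector in $\nat^T$. By construction, $m\step{\sigma\tau}$ and $\parikh(\sigma\tau)\leq x+y$. Setting $z := y-\parikh(\tau)\in\Z^T$ makes $\parikh(\tau)+z=y$ immediate, ensures the new remainder $r+z = (x+y)-\parikh(\sigma\tau)$ lies in $\nat^T$, and translates the termination criterion of the greedy procedure into the maximality clause $(r+z)(t)>0 \Rightarrow \neg m\step{\sigma\tau t}$; thus $ps'$ qualifies as a partial solution for $\C\cup\{c\}$.

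For the inequality, I would instantiate the constraint $c$ at the solution $x+y$. Abbreviating $A(t) := \sum_{s\in S_i}(F(t,s)-F(s,t))$ and substituting $r = x - \parikh(\sigma)$, the constraint rearranges to
\[ \sum_{t\in T_i} A(t)\bigl(r(t)+y(t)\bigr) \;\geq\; n. \]
Since every $t\in T_i$ satisfies $r(t)=0$ by the defining condition of $T_i$, the $r$-part vanishes and I obtain $\sum_{t\in T_i} A(t)\,y(t) \geq n$, which is precisely the asserted bound: $A(t)$ is the per-transition net flow of $y$ into $S_i$, and only the transitions in $T_i$ carry a nonzero contribution set up by $c$.

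The main obstacle I anticipate is notational rather than conceptual: tracking which firings of $\tau$ are effectively drawn from the original remainder $r$ and which from the newly added $T$-invariant $y$. Because tokens lent by $y$ can enable transitions that are already counted in $r$, it is entirely possible that $\parikh(\tau)(t) > y(t)$ on some coordinate, which forces the corresponding component of $z$ to be negative even while $r+z$ must remain non-negative overall. Admitting this signed decomposition is the only real subtlety; once it is in place, both halves of the lemma reduce to short, essentially algebraic manipulations.
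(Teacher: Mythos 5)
Your proof is correct and follows essentially the same route as the paper's own (sketched) argument: extend $\sigma$ maximally within the budget $x+y$ to obtain $\tau$ and the new remainder, and instantiate $c$ at the solution $x+y$, using $r(t)=0$ for $t\in T_i$ to cancel the $\parikh(\sigma)$ terms. Your explicit treatment of $z$ as a possibly signed vector with $r+z\ge 0$ is in fact more careful than the paper, which only remarks that $\parikh(\tau)+z=y$ is ``obvious.''

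One point in your closing step deserves correction. What you actually derive is $\sum_{t\in T_i}\sum_{s\in S_i}(F(t,s)-F(s,t))y(t)\ge n$, and your claim that this coincides with the lemma's stated bound over all of $T$ because ``only the transitions in $T_i$ carry a nonzero contribution'' is false: since $y$ is a $T$-invariant, $\sum_{t\in T}\sum_{s\in S_i}(F(t,s)-F(s,t))y(t)=\sum_{s\in S_i}(\I y)(s)=0$, so transitions outside $T_i$ (those consuming from $S_i$, or producers still listed in $r$) must contribute a compensating amount of $-n$ or less. The inequality over all of $T$ as printed is therefore unsatisfiable for $n\ge 1$ and must be read as a typo for $T_i$; the restricted bound you proved --- the gross production of the ``lending'' transitions of $y$ on $S_i$ --- is the intended statement and the one the paper's proof sketch itself derives. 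So the discrepancy reflects an error in the lemma's statement rather than a gap in your argument, but the sentence you used to bridge it should be deleted rather than repaired.
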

\begin{proof} {\em (Sketch)}
First, note that $T_i$ contains the transitions that produce more on $S_i$ than they consume, but we have explicitly excluded
all transitions of the remainder $r$, since we do not want the IP solver to increase the token production on $S_i$ by adding
transitions that could not fire anyway. I.e., we would like to have a chance to fire the additional transitions in $y$ at
some point, though there are no guarantees. The left hand side of $c$ contains one instance of a transition $t$ for each
token that $t$ effectively adds to $S_i$. If we apply some transition vector $u$ to the left hand side of $c$ (i.e.\ 
replacing $t$ by $u(t)$ for each transition $t\in T_i$), we therefore
get the number of tokens added to $S_i$ by firing the transitions from $T_i$ in $u$. Of course, other transitions in $u$ (outside $T_i$) might
reduce this number again. For the right hand side of $c$, we calculate how many tokens are actually added to $S_i$ by the
transitions from $T_i$ in the firing sequence $\sigma$ (and therefore also in the solution $x$) and increase that number by 
the $n$ extra tokens we would like to have. Since the extra tokens cannot come from $x$ in a solution $u:=x+y$, they must
be produced by $y$, i.e. $\sum_{t\in T}\sum_{s\in S_i}(F(t,s)-F(s,t))y(t)\ge n$. We might be able to fire some portion of $y$
after $\sigma$, resulting in the obvious $\parikh(\tau)+z=y$.
\end{proof}

When we apply our constraint we might get less or even more than the $n$ extra tokens, depending on the $T$-invariants in the net.
There are three possible outcomes when we apply the new constraint: The constraint might be unsatisfiable, we might detect
that the new constraint has not brought us any closer to a solution, or some of the remainder
transitions can now fire (or have at least come closer to firing). In the first two cases, our partial solution just cannot 
be extended to a full solution, nothing is lost if we throw it away. Failing to detect the second case will not cut off
any solutions but might prohibit termination. In the third case, we continue extending the partial
solution with the help of further constraints. Therefore, if a solution exists, we can still find it, and we know that we
can find it with the help of jump and increment constraints, since we only add necessary constraints. Further jump
constraints may be necessary as there may be many incomparable, minimal solutions fulfilling the latest increment constraint.
%It is totally unclear whether $x+y$ is realizable or even whether it is possible to obtain the $n$ extra tokens on $S_i$
%by firing those transitions in $y$ belonging to $T_i$, i.e. to borrow the necessary tokens for $S_i$ from $y$. 
%If this is not the case, we may add a further constraint, either
%an increment constraint to further increase the number of tokens on $S_i$ or a jump constraint to obtain a solution $x+y'$
%with a different $T$-invariant $y'$. We can also see now that it is really not necessary to know the exact amount of tokens
%missing on $S_i$: It is possible that the $T$-invariants generating exactly the right amount of tokens cannot be fired
%anyway, so a bigger number would be necessary. On the other hand, our estimate $n$ might be good even if it is too low
%when every $T$-invariant borrowing at least $n$ tokens to $S_i$ actually will generate more. Experimental results support
%our confidence in our estimate $n$. What we can really state at this point is:

\begin{thm}[Reachability of solutions]
If a reachability problem has a solution, a realizable solution of the solution space of the state equation can be reached by consecutively adding constraints
to the system of equations, always transforming jump constraints before adding increment constraints.
This even holds, given some solution $x$, if we limit the increment constraints to those built by lemma~\ref{L.CON} and jump constraints to
those of the form $t<x(t)$.
\end{thm}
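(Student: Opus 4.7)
The plan is to construct, by induction on a suitable progress measure, a chain of partial solutions $ps_0, ps_1, \ldots$ obtained from successively adding constraints that never cuts off a chosen target realizable solution. Fix any realizable solution, written as $x^* = b^* + \sum_{i=1}^k n_i p_i$ via the solution space corollary. Start with $ps_0 = (\emptyset, b, \sigma_0, r_0)$ where $b$ is the $\ord$-minimal solution of the bare state equation, so the chain begins $\ord$-below $x^*$. The goal is to reach a partial solution whose solution vector equals some realizable solution, since by the corollary ``Realizable solutions are full solutions'' this immediately yields a full solution, and hence a firing sequence.

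Phase one handles the base vector. Since $b$ may differ from $b^*$, apply Lemma~\ref{L.BJ} iteratively: because both vectors are minimal and hence incomparable, there is always a coordinate $t$ with $b^*(t) < z(t)$ for the current solution $z$, so the jump $t < z(t)$ forbids $z$ while leaving $b^*$ feasible; since only finitely many minimal solutions lie $\ord$-below $b^*$, we reach $b^*$ in finitely many jumps, and each jump has exactly the form $t < x(t)$ demanded by the strengthened claim. Before introducing any increment, invoke the ``Transforming jumps'' lemma to replace the accumulated jumps by the increments $t \ge b^*(t)$; this preserves $b^*$ as the $\ord$-smallest solution while reopening the cone $b^* + P^*$, which is precisely the ``transform jumps before adding increments'' discipline.

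Phase two adds the $T$-invariant portion. Whenever the current $ps = (\C, x, \sigma, r)$ has $r \neq 0$, run the source-SCC algorithm (justified by Lemmas~\ref{L.ALG1} and~\ref{L.ALG2}) to obtain a triple $(S_i, T_i, X_i)$ with a token estimate $n \ge 1$, and apply Lemma~\ref{L.CON} to build an increment constraint $c$. Feasibility of $c$ is the crux: because $x^*$ is realizable, the tail $x^* - x$ supplies at least as many tokens on $S_i$ as $c$ demands, so the solver returns some $x + y \le_{\ord} x^*$. Each such step strictly increases the $\ord$-value of the current vector while $x^*$ remains an upper bound, so after finitely many rounds either the remainder drops to zero and we are done, or the current vector overtakes $x^*$ in an $\ord$-comparable way, in which case repeating the jump/transform/increment cycle against the new $\ord$-minimum reduces us to a problem with strictly smaller target residual.

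The step I expect to be the main obstacle is showing that phase two really converges rather than drifting: Lemma~\ref{L.CON} only lower-bounds the token contribution, so the solver may pick a $T$-invariant $y$ unrelated to any $p_i$ in $x^*$, and the new $\ord$-minimum may become incomparable with $x^*$, demanding a fresh round of jumps layered on top of existing increments. Taming this requires re-invoking Lemma~\ref{L.PTFL} to view the sequence of partial solutions itself as an instance of the path-to-full-solution construction, and exploiting the finiteness of $B$ and $P$ together with the upper bound provided by the assumed realizable solution to force termination. Once that is done, the forms of the constraints used --- jumps $t < x(t)$ from Lemma~\ref{L.BJ} and increments from Lemma~\ref{L.CON} --- match the restriction in the second half of the statement, so the strengthened claim follows with no extra work.
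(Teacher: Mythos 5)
Your overall skeleton --- jump to a minimal solution below the target realizable solution, transform the jumps into increments, then grow the solution by increment constraints while the target bounds the search --- is the same as the paper's, and your phase one (Lemma~\ref{L.BJ} followed by the jump-transformation lemma) matches it exactly. The problem is the step you yourself flag as ``the main obstacle'': you do not close it, and the fix you gesture at does not work. Lemma~\ref{L.PTFL} is an existence statement about a chain of partial solutions attached to an \emph{already given} full solution and its constraint family; it cannot be re-invoked to show that the constraints your search generates converge. Likewise, finiteness of $B$ and $P$ does not bound the number of increment rounds, since (as you note) the solver may return $T$-invariants unrelated to the $p_i$ in the decomposition of $x^*$, and nothing in your argument prevents the $\ord$-minimum of the constrained system from drifting away from $x^*$ indefinitely.

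The paper closes this gap with a \emph{componentwise} invariant rather than an $\ord$-invariant. Write $x=\parikh(\sigma)$ for the target realizable solution; the invariant is that the current solution $y$ satisfies $y\le x$ componentwise. All increment constraints built along the way (those of Lemma~\ref{L.CON}, and the constraints $t\ge b'(t)$ obtained by transforming jumps after reaching a minimal solution $b'\le x$) are satisfied by $x$ itself, so $x$ stays feasible and no full solution is lost. Consequently, when an increment constraint yields a new $\ord$-minimal solution $y$ with $y\not\le x$, one applies Lemma~\ref{L.BJ} \emph{again}, to the constrained system, enumerating the minimal solutions $y_1,\dots,y_n$ incomparable to $y$; since $x$ is still a solution of that system, some $y_k\le x$ must occur among them, restoring the invariant. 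This is precisely the case your sketch leaves open ($\le_{\ord}$-dominance by $x^*$ does not let you recover a solution componentwise below $x^*$, which is what the next increment step needs). Termination then follows from the invariant: each increment constraint forces a nonzero $T$-invariant to be added, so $\sum_t y(t)$ strictly increases while remaining bounded by $\sum_t x(t)$, and each interleaved jump phase is finite because only finitely many solutions are $\ord$-smaller than $x$. So either a realizable solution is found earlier, or the procedure reaches $x$ itself.
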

\begin{proof} {\em (Sketch)}
The first sentence is obvious from lemma~\ref{L.PTFL}. For the second part, let $\sigma$ be a realizable solution
with $\parikh(\sigma)=x$ and let $b$ be a minimal solution with $b\le x$. We can obtain $b$ by a set of jump
constraints according to lemma~\ref{L.BJ}. Either $b$ is realizable and we are done, or $b$ is not realizable.
In the latter case, we compute an increment constraint forcing additional tokens to undermarked places according
to lemma~\ref{L.ALG1}, \ref{L.ALG2}, and~\ref{L.CON}, not losing any full solutions. Let $y$ be the solution
computed for this extended system, then $b\le y$ holds. If $y$ is realizable, we are done. If $y\le x$, we add the next
increment constraint. Otherwise, we test further jump constraints now and obtain all solutions $y_1,\ldots,y_n$ 
fulfilling the increment constraint but being incomparable to $y$. For one of them, $y_k\le x$ must hold (as we
have not lost any full solutions). We continue by checking realizability of $y_k$ and, in the negative case,
repeat adding a necessary increment constraint. Since $x$ is finite and each constraint leads only to
bigger solutions, at some point we must find a realizable solution or reach $x$.
\end{proof}

\section{Finding Partial Solutions}\label{sec5}

Producing partial solutions $ps=(\C,x,\sigma,r)$ from a solution $x$ of the state equation (plus $\C$) is actually
quite easily done by brute force. We can build a tree with marking-annotated nodes and the firing of transitions as edges,
allowing at most $x(t)$ instances of a transition $t$ on any path from the root of the tree to a leaf. Any leaf is
a new partial solution from which we may generate new solutions by adding constraints to the state equation and
forwarding the evolving linear system to our IP solver. If we just make a depth-first-search through our tree and
backtrack at any leaf, we build up all possible firing sequences realizable from $x$. This is obviously possible
without explicitly building the whole tree at once, thus saving memory.
Of course, the tree might grow exponentially in the size of the solution vector $x$ and so some optimizations
are in order to reduce the run-time. We would like to suggest a few ones here, especially partial order reductions.

\begin{enumerate}[(1)]
\item The stubborn set method \cite{ksv06} determines a set of transitions that can be fired before all others by
investigating conflicts and dependencies between transitions at the active marking. The stubborn set is often much smaller than
the set of enabled transitions under the same marking, leading to a tree with a lower degree. In our case, in particular the version
of \cite{schmidt_atpn99} is useful as, using this method, the reduced state space contains, for each trace to the target marking,
at least one permution of the same trace. Hence, the reduction is consistent with the given solution of the state equation.
\item Especially if transitions should fire multiple times ($x(t)>1$) we observe that the stubborn set
method alone is not efficient. The situation in Fig.~\ref{f.tree} may occur quite often. Assume we reach some marking
$\hat{m}$ by a firing sequence $\alpha$, so that transitions $t$ and $u$ are enabled. After proceeding through the
subtree behind $t$ we backtrack to the same point and now fire $u$ followed by some sequence $\sigma$ after which $t$
is enabled, leading to $m\step{\alpha}\hat{m}\step{u\sigma t}\widetilde{m}$. If $\hat{m}\step{t\sigma u}$ holds, we know that
it reaches the same marking $\widetilde{m}$ and the same remainder $r$ of transitions still has to fire. Therefore, in
both cases the future is identical. Since we have already investigated what happens after firing $\alpha t\sigma u$,
we may backtrack now omitting the subtree after $\alpha u\sigma t$. Note that a test if $\hat{m}\step{t\sigma u}$ holds
is quite cheap, as only those places $s$ with $\I_{s,t}<\I_{s,u}$ can prevent the sequence $t\sigma$.
Enabledness of $u$ after $t\sigma$ can be tested by reverse calculating $\widetilde{m}_1=\widetilde{m}-\I u$ and 
checking whether $\widetilde{m}_1$ is a marking and $\widetilde{m}_1\step{u}\widetilde{m}$ holds. 
\item There are situations where a leaf belongs to a partial solution $ps'$ that cannot lead to a (new) full solution. In this
case the partial solution does not need to be processed. If we already tried to realize $x$ yielding a partial solution
$ps=(\C,x,\sigma,r)$ and $ps'=(\C\cup\{c\},x+y,\sigma,r+y)$ is our new partial solution with an increment constraint $c$
and a $T$-invariant $y$, any realizable solution $x+y+z$ obtainable from $ps'$ can also be reached from $ps$ by first
adding a constraint $c'$ for the $T$-invariant $z$ (and later $c$, $y$). If no transition of $z$ can be fired after $\sigma$,
%By contradiction, we can show that if
%$z$ is not realizable after firing $\sigma$, 
$y+z$ is also not realizable after firing $\sigma$. We may be able to
mingle the realization of $z$ with the firing of $\sigma$, but that will be reflected by alternate partial solutions
(compared to both, $ps$ and $ps'$). Therefore, not processing $ps'$ will not lose any full solutions.
\item A similar situation occurs for $ps'=(\C\cup\{c\},x+y,\sigma\tau,r)$ with $\parikh(\tau)=y$. There is one problem,
though. Since we estimated a token need when choosing $c$ and that estimate may be too low, it is possible that while
firing $\tau$ we get closer to enabling some transition $t$ in $r$ without actually reaching that limit where $t$
becomes firable. We thus have to check for such a situation (by counting the minimal number of missing tokens for
firing $t$ in the intermediate markings occurring when firing $\sigma$ and $\tau$). If $\tau$ does not help in
approaching enabledness of some $t$ in $r$, we do not need to process $ps'$ any further.
\item Partial solutions should be memorized if possible to avoid using them as input for CEGAR again if they show up more than once.
%\item We will often have the same solution for the state equation with different constraints and even the same
%partial solutions may occur many times (by adding new constraints in different order, for example). 
%On the other hand, observation has shown that the number of different solutions obtained from the IP 
%solver and the number of partial solutions derived from one such solution can be quite small at the same time. 
%As long as we have enough memory, it is a good idea to save all these solutions, so we may avoid doing the
%same job twice.
\end{enumerate}

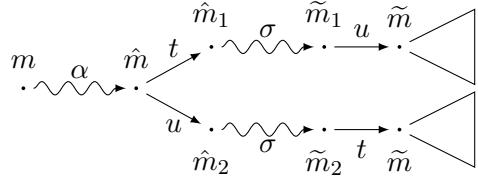
\begin{figure}[tb]
\centering
\begin{tikzpicture}[scale=0.5]
\fill (0,3) circle (0.05);
\draw (0,3) node[label=above:$m$] (root) {};
\fill (3,3) circle (0.05);
\draw (3,3) node[label=above:$\hat{m}$] (hm) {};
\draw[arrow,snake=snake] (root) -- node[label=above:$\alpha$] {} (hm);
\fill (5,4.1) circle (0.05);
\draw (5,4.1) node[label=above:$\hat{m}_1$] (hm1) {};
\fill (5,1.9) circle (0.05);
\draw (5,1.9) node[label=below:$\hat{m}_2$] (hm2) {};
\draw[arrow] (hm) -- node[label=above:$t$] {} (hm1);
\draw[arrow] (hm) -- node[label=below:$u$] {} (hm2);
\fill (8,4.1) circle (0.05);
\draw (8,4.1) node[label=above:$\widetilde{m}_1$] (tm1) {};
\fill (8,1.9) circle (0.05);
\draw (8,1.9) node[label=below:$\widetilde{m}_2$] (tm2) {};
\draw[arrow,snake=snake] (hm1) -- node[label=above:$\sigma$] {} (tm1);
\draw[arrow,snake=snake] (hm2) -- node[label=below:$\sigma$] {} (tm2);
\fill (10,4.1) circle (0.05);
\draw (10,4.1) node[label=above:$\widetilde{m}$] (tm3) {};
\fill (10,1.9) circle (0.05);
\draw (10,1.9) node[label=below:$\widetilde{m}$] (tm4) {};
\draw[arrow] (tm1) -- node[label=above:$u$] {} (tm3);
\draw[arrow] (tm2) -- node[label=below:$t$] {} (tm4);
\draw (tm3) -- +(2,1) -- +(2,-1) -- (tm3);
\draw (tm4) -- +(2,1) -- +(2,-1) -- (tm4);
\end{tikzpicture}
\caption{\label{f.tree}If both sequences $\alpha t\sigma u$ and $\alpha u\sigma t$ can be fired, the subtrees
after the nodes with marking $\tilde{m}$ are identical. Only one of the subtrees needs to be evaluated, the other one
may be omitted. Snaked lines denote firing sequences}
\end{figure}

%There are examples for each of the optimizations~1, 2, and~5 leading to an exponential speed-up.  
%Of course, there are also examples where they will not achieve any gain. 
%Optimizations~3 and~4 are more useful when there
%are no solutions at all, as they help terminating the algorithm even when the solution space is infinite.

\section{The complete algorithm}\label{sec6}

The following algorithm integrates the results from the previous sections, using a queue of
partial solutions as a job queue that is ordered by ascending size of the possible output, i.e.\ 
the firing sequence from an initial marking $m$ to a final marking $m'$.

\smallskip{\small\noindent
{\tt
{\bf input}: Reachability problem $(N,m,m')$\\
{\bf output}: Firing sequence from $m$ to $m'$ if one exists\\
{\bf var}: Queue $Q$ of partial solutions $(\Gamma,x,\sigma,r)$ to work on\\
{\bf var}: Set $S$ of all partial solutions computed (for optimisation (5))\\
Put $(\emptyset,0,\varepsilon,0)$ into $Q$;\\
{\bf while} $Q\not=\emptyset$:\\
\hspace*{3mm} Remove from $Q$ an element $(\Gamma,x,\sigma,r)$ with minimal size for $x$;\\ 
%\hspace*{3mm} Transform all jump constraints in $\Gamma$ to increment constraints;\\ 
\hspace*{3mm} Compute new increment constraints $\Delta$ from $r$ (see section~\ref{sec4});\\
\hspace*{3mm} {\bf if} $(\Gamma\cup\Delta,\cdot,\cdot,\cdot)\in S$ {\bf then continue} to the next loop;\\ 
\hspace*{3mm} Compute the minimal solution $y$ for state equation plus $\Gamma\cup\Delta$;\\
\hspace*{3mm} {\bf if} no solution $y$ exists {\bf then continue} to the next loop;\\ 
\hspace*{3mm} {\bf for each} set of jump constraints $\emptyset\not=R\subseteq\{t<y(t)\,|\,y(t)>x(t)\}$\\
\hspace*{6mm} Transform jump constraints in $\Gamma\cup R$ to increment constraints;\\
\hspace*{6mm} Create a new partial solution $(\Gamma\cup R,x,\sigma,r)$ in $Q$ and $S$;\\
\hspace*{3mm} Traverse the tree of firing sequences $\sigma'$ with $\parikh(\sigma')\le y$:\\
\hspace*{6mm} (Depth first, use optimisations (1) and (2) to prune the tree)\\ 
\hspace*{6mm} Upon reaching a leaf ($\sigma'$ cannot be prolonged):\\
\hspace*{9mm} {\bf if} $y=\parikh(\sigma')$ {\bf then halt} with solution $\sigma'$;\\
\hspace*{9mm} {\bf if} optimisation (3) or (4) applies {\bf then} backtrack;\\ 
\hspace*{9mm} {\bf if} $(\Gamma\cup\Delta,y,\sigma',y-\parikh(\sigma'))\in S$ {\bf then} backtrack;\\
\hspace*{9mm} Put $(\Gamma\cup\Delta,y,\sigma',y-\parikh(\sigma'))$ into $Q$ and $S$\\
{\bf end while};\\
Print "no solution"
}}\smallskip

In each while-loop one candidate from the partial solutions queue is processed. First, the constraints $\Delta$
are added according to the non-firable remainder $r$ and a minimal solution vector $y$ is computed.
If this is successful, jump constraints are added to the old solution $x$ to obtain other new solutions
in a later loop. The minimal solution vector $y$ is then checked for realizable firing sequences.
If at some point a firing sequence is incomplete and not extendable, a new partial solution is
generated to be able to add more increment constraints later on. The result ``no solution''
can be obtained by two mechanisms: a partial solution is thrown away if no solution vector $y$
can be computed and the optimisations may prune the trees of firing sequences so much that no
new partial solutions are created anymore. There is no guarantee for termination, of course,
but in practise this works well.

Note that new jump constraints can lead to an exponential growth in the number of partial solutions,
as one such partial solution is created for each subset of transitions where the new solution
exceeds the old one. Usually, there are only few different partial solutions, i.e.\ we have just a
problem with space limitations but not really with computation time. Thus, we advise to modify the algorithm
such that the new partial solutions created from jump constraints are not introduced all at once but one at a time, each time
a former one has been processed. If a full solution is found, the unprocessed jump constraints
are deleted before they consume space and time, if no full solution exists our experience says
that jump constraints are scarce anyway.

\section{Diagnostic Information for Unreachability}\label{sec7}

If one of the optimizations~3 or~4 occurs, we know that an increment constraint $c$ added to our system of linear equations
did not have the desired effect. This means, we wanted to increase the number of tokens on some set of places but
were not able to do so, i.e. a $T$-invariant either was not firable or it had some side effect cancelling the usefulness
of the token increase. Such a thing could happen for example if we added a transition $t''$ to Fig.~\ref{f.tinv}
that consumes a token from each $s_2$ and $s_3$ and tried to fire it. The $T$-invariant $u+u'$ will produce a token on $s_2$, where we need
it, but takes away that token from $s_3$, cancelling its positive effect. Even if we fire this $T$-invariant more
than once, it is of no use.

Compare the partial solutions $ps=(\C,x,\sigma,r)$ and $ps'=(\C\cup\{c\},x+y,\sigma,r+y)$
resp.\ $ps'=(\C\cup\{c\},x+y,\sigma\tau,r)$ from optimizations~3 and~4. It is obvious that
adding the constraint $c$ had not the desired effect, i.e. $c$ failed. We were not able to
obtain enough tokens on some set of places $S_i$ (that was the reason for introducing $c$) to
enable some transitions $T_i$ (or $X_i$, see section~\ref{sec4}). If we memorize $S_i$ and 
$T_i$/$X_i$ and the number of tokens $n$ missing on $S_i$ together with $c$, we can now give a 
(partial) reason for unreachability. We need at least $n$ more tokens on the set $S_i$
to fire one transition in $T_i$ or all transitions in $X_i$ after the firing sequence $\sigma$.
Indeed, $S_i$ produces a kind of deadlock: $S_i \cup T_i$ forms a component of the net $N$
that is strongly connected and the transitions of $T_i\cup X_i$ cannot fire after $\sigma$
since there are not enough tokens left in $S_i$. There may be other transitions though, that could put
tokens onto $S_i$. If our partial solutions could not be extended with them to obtain a
full solution, these transitions are either dead after firing $\sigma$ or they lead to a
system of state equation plus constraints that can not be fulfilled. In the latter case,
after firing such a transition the final marking becomes unreachable.

Even if we find a reason for unreachability, there may still be other paths that lead
to a solution. But if the algorithm terminates altogether (which we are not able to 
guarantee) and we have not found any solution, we can pick up our failed constraints
and present them as diagnostic information. Fig.\ref{f.counternet} in the next section shows a visualization
of such information in a small example Petri net. Hence, our tool cannot
only tell the user that a certain marking is unreachable but can also give hints as to where the
bottlenecks of the token distribution are that cannot be passed.

\section{Experimental Results}\label{sec8}

The algorithm presented here has been implemented in a tool named Sara \cite{sara}. We compare Sara
to LoLA~\cite{Wolf_2007_icatpn}, a low level analyzer searching the (reduced) state space of a Petri net. According to
independent reports, e.g.~\cite{talcottdill}, LoLA performs very well on reachability queries
and possibly is the fastest tool for standard low level Petri nets.
The following tests, real-world examples as well as academic constructions, were run on a 2.6GHz PC 
with 4GB RAM under Windows XP and Cygwin. While the CPU had four cores, only one was used for the tools.
Tests on a similar Linux system lead to comparable but slightly faster results.

\begin{iteMize}{$\bullet$}
\item 590 business processes with about 20 up to 300 actions each were tested for ``relaxed
soundness''. Relaxed soundness means that, for each transition $t$ of the net, it is possible to reach the final marking from the initial marking with a path that contains $t$. 
This problem can be solved using the methods presented above for reachability. The occurrence of $t$ can be asserted by an additional constraint $t > 0$ to the IP solver.
The processes were transformed into Petri nets and for each action a test was performed
to decide if it was possible to execute the action and reach the final state of the process afterwards.
Successful tests for all actions/transitions yield relaxed soundness.
Sara was able to decide relaxed soundness for all of the 590 nets together (510 were relaxed sound)
in 198 seconds, which makes
about a third of a second per net. One business process was especially hard and took 12278 calls
to lp\_solve and 24 seconds before a decision could be made. LoLA was unable to solve 17
of the problems (including the one mentioned above) and took 24 minutes for the remaining 573. 
\item Four Petri nets derived in the context of verifying parameterized boolean programs (and published on a web page \cite{wsts}) were presented to us to decide coverability.
Sara needed less than one time slice of the CPU per net and solved all instances correctly.
LoLA was not able to find the negative solution to one of the problems due to insufficient memory (here, tests were made with up to 32GB RAM),
the remaining three problems were immediately solved.
\item In 2003, H. Garavel~\cite{garavel03} proposed a challenge on the internet to check a Petri net derived from
a LOTOS specification for dead (i.e. never firable) transitions. The net consisted of 776 transitions
and 485 places, so 776 tests needed to be made. Of the few tools that succeeded, LoLA was the fastest
with about 10 minutes, but it was necessary to handle two of the transitions separately with a differently
configured version of LoLA. In our setting, seven years later, LoLA needed 41 seconds to obtain the
same result. Sara came to the same conclusions in 26 seconds. In most cases the first solution of
lp\_solve was sufficient, but for some transitions it could take up to 15 calls
to lp\_solve. Since none of the 776 transitions is dead,
Sara also delivered 776 firing sequences to enable the transitions, with an average length of
15 and a longest sequence of 28 transitions. In 2003 the best upper bound for the sequences
lengths was assumed to be 35, while LoLA found sequences of widely varying length (the longest having several thousand transition occurrences), though most
were shorter than 50 transitions.
\item We investigated five nets that represent biochemical reaction chains. The nets stem from the Pathway Logic Assistent \cite{talcottdill} where LoLA
is integrated for solving reachability problems. For some of the nets (which have several hundred places and transitions), LoLA was incapable of
verifying reachability. Sara was able to solve the problems. For three systems, Sara ran less than a second, the remaining two problems
could be solved in approximately half an hour. In one of these problem instances the given marking was unreachable.
\item Using specifically constructed nets with increasing arc weights (and token numbers) it was possible to outsmart
Sara -- the execution times rose exponentially with linearly increasing arc weights, the first five
times being 0.1, 3.3, 32, 180, and 699 seconds. LoLA, on the other hand,
decided reachability in less than 3 seconds (seemingly constant time) in these cases.
\end{iteMize}\smallskip

\noindent In addition to these experiments, Sara participated in the Model Checking Contest \cite{mcc} that was organized within the workshop on
Scalable and Usable Model Checking for Petri Nets and Other Models of Concurrency in June 2011 in Newcastle upon Tyne.
In this contest, Sara competed in reachability queries on place/transition Petri nets representing a flexible manufacturing system,
a KANBAN system, and a biochemical reaction chain. Experiments were done on increasing state spaces that were obtained by adding
additional tokens (up to several hundred) on certain places. Sara was able to solve most problem instances in less than a second, the remaining cases
were solved within a few seconds. Only two queries led to memory overflow. Sara outperformed all other participating tools including LoLA.

\begin{table}[tb]
\centering\small
\begin{tabular}{|l|r|r|c|c|c|c|c|}\hline
Net & Inst. & Sol? & Full & $\neg$1 & $\neg$2 & $\neg$3/4 & $\neg$5\\\hline
garavel & 776 & 15 & 26s (0.11) & 25s (0.11) & 26s (0.11) & 26s (0.11) & 26s (0.11)\\\hline
bad-bp & 142 & - & 24s (85) & 24s (85) & 24s (85) & 24s (85) & NR\\\hline
good-bp & 144 & 53 & 1.7s (0) & 1.7s (0) & 1.7s (0) & 1.7s (0) & 1.7s (0)\\\hline
test7 & 10 & 175 & 29s (13) & 990s (22) & NR & 49s (14) & 29s (13)\\\hline
test8-1 & 1 & 40 & 0.1s (13) & 0.35s (22) & 49s (13) & 0.2s (14) & 0.11s (13)\\\hline
test8-2 & 1 & 76 & 3.3s (21) & 24s (51) & NR & 11s (34) & 3.8s (21)\\\hline
test8-3 & 1 & 112 & 32s (27) & 390s (80) & NR & 175s (71) & 33s (27)\\\hline
test9 & 1 & - & 0.4s (53) & 22s (464) & NR & NR & 0.9s (65)\\\hline
\end{tabular}\vspace*{2mm}
\caption{\label{t.1}Results for shutting down one heuristic. Inst.\ is the number of problem
instances to be solved for the net, Sol? the average solution length or ``-'' if no solution
exists. Columns Full, $\neg$1, $\neg$2, $\neg$3/4, and $\neg$5 contain the result
with all optimizations, without stubborn sets, without
subtree cutting, without partial solution cutting, and without saving intermediate results 
(numbers are according to Sect.~\ref{sec5}).
Each entry shows the elapsed time and the number of necessary CEGAR steps (average), or
NR if no result could be obtained in less than a day}
\end{table}

We also checked our heuristics from Sect.~\ref{sec5} with some of the above nets by switching
the former off and comparing the results (see Table~\ref{t.1}). Our implementation needs both forms of constraints,
jump and increment, to guarantee that all solutions of the state equation can be visited.
Going through these solutions in a different order, e.g.\ the total order $\ord$, is difficult
and a comparison was not possible so far.

The nets tested fall in two categories. Garavel's net 
and the business processes are extensive nets with a low token count and without much
concurrency that could be tackled by partial order reduction. The heuristics have no effect
here, short runtimes result from finding a good solution to the state equation early on.
Only for the hardest of the business processes (bad-bp) memorizing intermediate results
to avoid checking the same partial solution over and over made sense -- without it we did not get
a result at all.

The other category are compact nets. In our test examples a high number of tokens is produced 
and then must be correctly distributed, before
the tokens can be removed again to produce the final marking.
With a high level of concurrency in the nets, partial order
reduction is extremely useful, the cutting off of already seen subtrees(2) even more than
the stubborn set method(1). In the last net (test9), the sought intermediate token distribution is
unreachable but the state equation has infinitely many solutions. Only by cutting off
infinite parts of the solution tree with the help of optimization~3 and~4 it becomes
possible to solve the problem at all. Without them, the number of outstanding CEGAR
steps reaches 1000 within less than a minute and continues to increase monotonically.
The algorithm slows down more and more then as the solutions to the state equation and thus
the potential firing sequences become larger.

As mentioned in the previous section, Sara can also provide diagnostic information
for unreachable problem instances as long as the state equation has a solution. This feature was tested
e.g.\ with the hardest of the 590 business processes from above, which provides such a negative
case for some of its 142 transitions. Using the diagnostic information, we were able to understand the
reason for unreachability thus validating the result computed by Sara.
Since we cannot present such a large net here, a condensed
version with the same important features is shown in Fig.~\ref{f.counternet}.

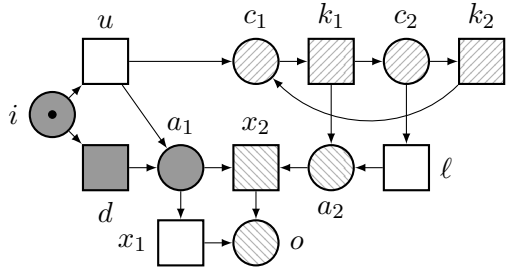
\begin{figure}[tb]
\centering
\begin{tikzpicture}[scale=0.5]
\node[gplace,label=left:$i$] (i) {};
\node[transition,label=above:$u$] (u) [above right of=i] {};
\node[gtransition,label=below:$d$] (d) [below right of=i] {};
\node[iplace,label=above:$c_1$] (c1) [right of=u, xshift=1cm] {};
\node[gplace,label=above:$a_1$] (a1) [right of=d] {};
\node[itransition,label=above:$k_1$] (k1) [right of=c1] {};
\node[iplace,label=above:$c_2$] (c2) [right of=k1] {};
\node[itransition,label=above:$k_2$] (k2) [right of=c2] {};
\node[otransition,label=above:$x_2$] (x2) [right of=a1] {};
\node[oplace,label=below:$a_2$] (a2) [right of=x2] {};
\node[transition,label=right:$\ell$] (l) [right of=a2] {};
\node[transition,label=left:$x_1$] (x1) [below of=a1] {};
\node[oplace,label=right:$o$] (o) [right of=x1] {};
\draw (i) node[token] {};
\draw[arrow] (i) to (u);
\draw[arrow] (i) to (d);
\draw[arrow] (u) to (c1);
\draw[arrow] (u) to (a1);
\draw[arrow] (d) to (a1);
\draw[arrow] (c1) to (k1);
\draw[arrow] (k1) to (c2);
\draw[arrow] (c2) to (k2);
\draw[arrow] (k2) to[out=225,in=315] (c1);
\draw[arrow] (k1) to (a2);
\draw[arrow] (c2) to (l);
\draw[arrow] (l) to (a2);
\draw[arrow] (a1) to (x2);
\draw[arrow] (a2) to (x2);
\draw[arrow] (a1) to (x1);
\draw[arrow] (x1) to (o);
\draw[arrow] (x2) to (o);
%\node[transition,label=above:$u$] (u) [right of=s2,xshift=5mm] {};
\end{tikzpicture}
\caption{\label{f.counternet}A condensed, flawed business process. One token should flow
	from the initial place $i$ to the output place $o$ with all other places empty finally.
	Non-white transitions
	appear in Sara's solution to the state equation, but only the dark gray one is fireable.
	Ascending stripes show the area with non-fireable transitions where additional tokens could not be generated}
\end{figure}

Sara provides a partitioning of the net showing where the relaxed soundness test (for any of the
transitions $k_1$, $k_2$, or $x_2$) fails, e.g.\ it is impossible to fire $x_2$ and afterwards reach the
final marking with exactly one token on place $o$ (other places being empty). The solution 
$d+k_1+k_2+x_2$ of the state equation
can neither be realized nor extended to a ``better'' solution.
The ascending pattern shows a region of the net (given by Sara)
where tokens are needed but cannot be generated without violating the state equation.
The descending pattern marks areas that are affected by the former ones, i.e. areas with
also non-fireable transitions. The gray transition $d$ is the only fireable transition occurring
in the solution. When analyzing the net we can see that the cycle $c_1-k_1-c_2-k_2$ indeed
constitutes a flaw for a business process: if the cycle gets marked and then emptied later,
at least two tokens must flow through $a_2$, one of which can never be removed.
Using $u$ instead of $d$ is therefore impossible, i.e.\ $d x_1$ is the only
firing sequence reaching the final marking.

\section{Towards Parallel Execution}

In essence, Sara solves and evaluates a large number of IP problems corresponding to partial solutions.
Different partial solutions are processed mostly independently. In addition, Fig.~\ref{f.solpath} suggests that
our search space is in fact a tree. It is thus natural to consider a parallelization scheme where new threads are
opened whenever there is more than one possibility to add a constraint to a partial solution. 

The network traffic for opening a new thread is rather low. The Petri net input as well as the problem instance can be 
loaded by each thread independently and ahead of the actual activation. Then, only the description of a partial
solution, especially a set of additional constraints, needs to be transmitted. Compared to the network traffic,
the internal work involves solving at least one IP problem including subsequent analysis. 

Hence, the crucial factor for feasibility of
this approach is the branching factor in the search space. With branching factor, we mean the number of new subproblems
that are introduced upon the analysis of a partial solution. If the branching factor is one, the subproblems appear sequentially
and parallelization does not make sense. Larger values suggest better parallelization results.

We measured the branching factors in several of the examples listed in the previous section. Among those instances that were
solved by Sara in less then a second, the maximum branching factor was two while the average branching factor was close to one.
In nontrivial examples with practical background (e.g.~the hard instances of biochemical reaction chains), the maximal branching factor was
two. The average branching factor ranged between 1.25 and 1.99, so there is enough room for feeding many independent threads.
Substantial speedup can be expected for processing the large number of IP instances (beyond 160.000).
In academic challenges, the maximum branching factor was four, with 2.6 being a typical average value.

We conclude, that the potential for parallel execution is excellent, especially in those problem instances where it is
most needed. 

\section{Conclusion}

We proposed a promising technique for reachability verification. It is based on the Petri net state equation that naturally
provides an overapproximation of the set of reachable states of  Petri net. Using the idea of counterexample
guided abstraction refinement, we were able to significantly improve the precision of the technique. Our approach has
several advantages compared to state space techniques:

(1) It is very efficient, for two reasons. First, we traverse the set of solutions of the state equation rather than the set of all
reachable states. That is, our search space is already substantially constrained. Second, we employ the very mature
technology of IP solving. We could validate the efficiency in a large number of challenging examples as well as in the
independent assessment made in the model checking contest in 2011.

(2) It tends to produce very small (not necessarily minimal) witness paths. This is due to the gradual introduction of
period vectors to a minimal solution of the state equation. Unlike explicit state space techniques, short witnesses come
without exponentially blowing up the search space. In this regard, the state equation as such resembles a symbolic
state representation. Indeed, one solution to the state equation represents up to exponentially many different firing
sequences.

(3) It has the tendency to terminate early on unreachable problem instances. In several cases, the initial state equation
may already assert unreachability. In contrast, state space techniques (whether explicit or symbolic)
cannot benefit from the on-the-fly paradigm if
the target state is unreachable. So they would produce all the reachable states modulo the applied state space reduction
techniques.

(4) It has a rather pleasant memory consumption. As IP solving is an NP complete problem, polynomial space is sufficient for
the core procedure. Only the management of open subproblems may require arbitrary space.

(5) It has an excellent potential for parallelization. Internal executions (IP solving) are nontrivial while network traffic (transmitting
a problem description) is rather lightweight.

(6) It produces some kind of diagnostic information for unreachable problem instances. This potential must be further explored.
In particular, usefulness of the diagnostics must be assessed in studies with independent users unaware of the
internal mechanisms of Sara.

On the negative side, we need to mention that our approach is incomplete. We are not able to show termination of our procedure
and a guaranteed termination may even contradict the EXPSPACE hardness of the reachability problem in general. On the
other hand, our experience with Sara so far is very encouraging.

Our approach applies the concept of counterexample guided abstraction refinement in a novel context: the
abstraction is not given as a transition system but as a linear-algebraic overapproximation of the reachable states.

The state equation as such has been used earlier for verification purposes, see for instance \cite{melzer00}.
In \cite{narrowing}, it is used as an initial way of narrowing the state space exploration but not refined
according to the CEGAR.

\bibliographystyle{plain}
\bibliography{reachability}

\end{document}